\def\etal{\textsl{et~al. }}
\def\dist{\mathrm{d}}
\def\ANN{\mathop{\mathrm{ANN}}}
\def\AFN{\mathop{\mathrm{AFN}}}
\def\Ed{\mathsf{A}\dist}
\def\maxop{{\sc Max}}
\def\sumop{{\sc Sum}}
\def\annmax{ANN-\maxop}
\def\calC{\mathcal{C}}
\def\calA{\mathcal{A}}
\def\calD{\mathcal{D}}
\newenvironment{proof}{\noindent {\textbf{Proof:}}\rm}{\hfill $\Box$\rm}
\newcommand{\mparagraph}[1]{\vspace{1.0ex \@plus1ex
      \@minus.2ex}\noindent\textbf{#1}\hspace{1em}}
  \long\def\@makecaption#1#2{
    \vskip 10pt
    \setbox\@tempboxa\hbox{{\footnotesize \textbf{#1.} #2}}
    \ifdim \wd\@tempboxa >\hsize         
        {\footnotesize \textbf{#1.} #2\par}
      \else                              
        \hbox to\hsize{\hfil\box\@tempboxa\hfil}
    \fi}
\begin{document}

\title{On Top-$k$ Weighted {\sc Sum} Aggregate Nearest and
	Farthest Neighbors in the $L_1$ Plane}


\author{Haitao Wang\inst{1} \and Wuzhou Zhang\inst{2}}

 \institute{
    Department of Computer Science\\
   Utah State University, Logan, UT 84322, USA\\
   \email{haitao.wang@usu.edu}
   \and
   Department of Computer Science\\
   Duke University, Durham, NC 27708, USA\\
   \email{wuzhou@cs.duke.edu}
 }

\maketitle

\pagenumbering{arabic}
\setcounter{page}{1}

\begin{abstract}
In this paper, we study top-$k$ aggregate (or group) nearest neighbor queries using the weighted \sumop\
operator under the $L_1$ metric in the plane. Given a set $P$ of $n$ points, for any query consisting of a set
$Q$ of $m$ weighted points and an integer $k$, $ 1 \le k \le n$, the top-$k$ aggregate
nearest neighbor query asks for the $k$ points of $P$ whose
{\em aggregate distances} to $Q$ are the smallest, where the aggregate
distance of each point $p$ of $P$ to $Q$ is the sum of the
weighted distances from $p$ to all points of $Q$.
We build an $O(n\log n\log\log n)$-size data structure in $O(n\log n \log\log n)$
time, such that each top-$k$ query can be answered in
$O(m\log m+(k+m)\log^2 n)$ time. 
We also obtain other results with trade-off between preprocessing and query.
Even for the special case where $k=1$, our results are better than
the previously best method (in PODS 2012), which
requires $O(n\log^2 n)$ preprocessing time,
$O(n\log^2 n)$ space, and $O(m^2\log^3 n)$ query time.
In addition, for the one-dimensional version of this problem, our
approach can build
an $O(n)$-size data structure in $O(n\log n)$ time that can support
$O(\min\{k,\log m\}\cdot m+k+\log n)$  time queries.
Further, we extend our techniques to the top-$k$ aggregate {\em
	farthest}
neighbor queries, with the same bounds.
\end{abstract}

\section{Introduction}
\label{sec:intro}
Top-$k$ nearest neighbor searching has been well-studied, e.g., see \cite{ref:AurenhammerVo00,ref:ClarksonNe06} for a survey.
For a set $P$ of points in the $d$-D space $\mathbb{R}^d$, the problem asks for a
data structure to quickly report the $k$ nearest neighbors in $P$ for any
query point. Aggregate nearest neighbor ($\ANN$) searching,
also known as {\em group} nearest neighbor searching \cite{ref:AgarwalNe12,ref:LiGr11,ref:LiTw05,ref:LiFl11,ref:LianPr08,ref:LuoEf07,ref:PapadiasGr04,ref:PapadiasAg05,ref:SharifzadehVo10,ref:YiuAg05},
is a generalization of the basic problem,
where each query consists of a set of (weighted)
points and the result of the query is based on applying
{\em aggregate} operators, such as (weighted) \sumop ~and \maxop, on all the points in the query.
In this paper, we study top-$k$ $\ANN$ queries using the weighted \sumop\ operator under the $L_1$ metric in the plane.


\subsection{Problem Statement, Previous Work, and Our Results}
For any two points $p$ and $q$ in the plane, denote by
$\dist(p,q)$ the distance of $p$ and $q$.
Let $Q$ be a set of points and each point $q\in Q$ has a weight $w(q) > 0$.
Throughout the paper, we use $m$ to denote the size of $Q$ (note that
$m$ is not a fixed value). For any point $p$ in the plane,
the \textit{aggregate distance} from $p$ to $Q$,
denoted by $\Ed(p,Q)$, is defined to be
$$\Ed(p, Q) = \sum_{q\in Q} w(q) \dist(p, q).$$


Let $P$ be a set of $n$ points in
the plane. Given a query consisting of a set $Q$ of weighted points and an integer $k$, $1\leq k\leq n$, the \textit{top-$k$ aggregate nearest neighbors}
(top-$k$ $\ANN$s) of $Q$ in $P$ are the $k$ points of $P$ whose aggregate
distances to $Q$ are the smallest among all points in $P$; we denote
by $S_k(P,Q)$ the set of the top-$k$ $\ANN$s.
Our goal is to design a data structure to quickly report the
set $S_k(P,Q)$ for any query set $Q$ and $k$.

In this paper, we consider the $L_1$ metric. Specifically, for any two points $p = (x(p),y(p))$ and $q = (x(q),y(q))$ in the plane, their distance is defined to be
$\dist(p,q)=|x(p)-x(q)|+|y(p)-y(q)|$.
We build an $O(n\log n\log\log n)$-size data structure in $O(n\log n\log\log n)$ time
that can support each query
in $O(m\log m+(k+m)\log^2 n)$ time. Note that we also return the
aggregate distance of each point in $S_k(P,Q)$ to $Q$  and the points
		of $S_k(P,Q)$ are actually reported in sorted order by their
		aggregate distances to $Q$. With trade-off between preprocessing and query time, we also build two other data structures: the first one has $O(n\log n)$ preprocessing time and space with $O(m\log m+(k+m)\log^2 n\log\log n)$ query time; the second one has $O(n\log n\log^* n)$ preprocessing time and space with $O(m\log m+(k+m)\log^2 n\log^* n)$ query time.

For the 1-D version of this problem, our
approach can build an $O(n)$-size data structure in $O(n\log n)$ time with
$O(\min\{k,\log m\}\cdot m+k+\log n)$  query time, and the query time
can be reduced to $O(k+m+\log n)$ time if the points of $Q$ are given
in sorted order.

Further, we extend our techniques to solve the top-$k$ {\em aggregate
farthest neighbor} ($\AFN$) searching problem, with the same bounds as above.

\subsection{Related Work}

Previously, only approximation and heuristic results were given for
the top-$k$ ANN query problem \cite{ref:LjosaAP07}.
For the special case where $k=1$, Agarwal \etal\cite{ref:AgarwalNe12}
built an $O(n\log^2 n)$-size data structure in $O(n\log^2 n)$ time
that can answer each top-$1$ $\ANN$ query in $O(m^2\log^3 n)$ time.
Hence, even for the special case where $k=1$, our results are better
than that in \cite{ref:AgarwalNe12} in all three aspects:
preprocessing time, space, and query time. Recently, Ahn {\em et al.}
\cite{ref:AhnGr13} studied the {\em unweighted
version} of the problem, where they gave two data structures under
the assumption that the maximum value of $|Q|$ is known in advance
as $m$ for all queries, with the following time bounds: the first
one is built in $O(m^2 n \log^2 n)$ time and space with $O(m^2\log
		n +k(\log\log n+\log m))$ query time; the second one is
built in  $O(m^2 n\log n)$ time and $O(m^2 n)$ space with
$O(m^2\log n+(k+m)\log^2 n)$ query time. Clearly, our results,
albeit on the weighted version and do not require the assumption, are
generally better than the results in \cite{ref:AhnGr13} for most cases
(e.g., if $m=O(1)$, their second result is better than ours on the
 unweighted version with the assumption).

For the $L_2$ metric (i.e., the Euclidean distance),
only heuristic and approximation
algorithms were known previously for answering even top-$1$ $\ANN$
queries~\cite{ref:AgarwalNe12,ref:LiTw05,ref:LiFl11,ref:LuoEf07,ref:PapadiasGr04,ref:PapadiasAg05,ref:SharifzadehVo10,ref:YiuAg05};
the best known heuristic method for the top-1 $\ANN$ queries is based on R-tree~\cite{ref:PapadiasAg05}, and Li \etal\cite{ref:LiFl11} gave
a data structure with 3-approximation query performance
for the top-1 $\ANN$. Agarwal {\em et al.}
\cite{ref:AgarwalNe12} gave a data structure with a polynomial-time
approximation scheme for the top-1 $\ANN$ queries.

If the \maxop\ operator is used to define the aggregate distance, i.e.,
$\Ed(p, Q) = \max_{q\in Q} w(q) \dist(p, q)$,
	we refer the problem as top-$k$ \annmax\ queries.
To the best of our knowledge, we are not aware of any previous work on
the general weighted top-$k$ \annmax\ queries, even for $k=1$. Below is some previous work on the unweighted versions.
For top-1 \annmax\ queries, Papadias et al. \cite{ref:PapadiasAg05}
presented a heuristic Minimum Bounding Method with worst case query time
$O(n+m)$ for $L_2$ metric. Recently, Li et al. \cite{ref:LiGr11}
gave more results on the $L_2$ top-1
\annmax\ queries (the queries were called {\em group
enclosing queries}): by using $R$-tree
\cite{ref:GuttmanR84}, they \cite{ref:LiGr11}
 gave an exact algorithm that is very fast in practice although theoretically the
worst case query time is $O(n+m)$; they \cite{ref:LiGr11}
also gave a $\sqrt{2}$-approximation algorithm with query time
$O(m+\log n)$ for any fixed dimensions and
they further extended the algorithm to obtain a
$(1+\epsilon)$-approximation result.
Wang \cite{ref:WangAg13} gave an exact algorithm
that can answer each $L_2$ top-1 \annmax\ query in $O(m\sqrt{n}\log^{O(1)}n)$ time.
For the $L_1$ metric, Wang \cite{ref:WangAg13} constructed a data
structure of $O(n)$ size in $O(n\log n)$ time that can answer each
(unweighted) $L_1$ top-$k$ \annmax\ query in $O(m+k\log n)$ time.

In addition, Li et al. \cite{ref:LiFl11} proposed the {\em
flexible} top-$k$ $\ANN$ queries, which extend the classical $\ANN$ queries, and
they provided constant ratio approximation algorithms that work for
both \sumop\ and \maxop\ operators in any metric space and any fixed dimension.

We should point out that the weighted $\ANN$ queries studied in this
paper can be used to solve the {\em expected nearest neighbor} (ENN)
queries for uncertain query points under $L_1$ metric. In each ENN query, an
uncertain point $Q$ is given with $m$ different locations and each
location $q$ is associated with a probability $w(q)$ of being the
true location of $Q$, and the query asks for the point in $P$ that
has the smallest expected distance to $Q$. Agarwal {\em et al.}
\cite{ref:AgarwalNe12} gave the first nontrivial methods for answering exact or
approximate ENN queries under $L_1$, $L_2$, and the squared Euclidean distance,
with provable performance guarantees. We have mentioned their exact
top-1 query algorithm on $L_1$ metric earlier.
Other formulations on nearest neighbor queries over uncertain data
have also been studied in  \cite{ref:AgarwalNe12}
and elsewhere, e.g., \cite{ref:AgarwalNe13,ref:BeskalesEf08,ref:ChengPr08,ref:ChengUV10,ref:LjosaAP07,ref:YuenSu10}.


For the top-$k$ $\AFN$ queries, to the best of our knowledge, we are not
aware of any previous work on the weighted queries. For unweighted
queries, Gao {\em et al.} \cite{ref:GaoAg11} gave heuristic algorithms
using R-trees for the $L_2$ metric. For the $L_1$ metric, Ahn {\em et
	al.} \cite{ref:AhnGr13} also extended their techniques to top-$k$
	$\AFN$ queries with the same time bounds, assuming that the maximum
	value $m$ is known for all queries. For $k=1$, farthest Voronoi diagrams \cite{ref:AurenhammerVo00} can be used for answering top-1 $\AFN$ queries.

The rest of the paper is organized as follows. In Section
\ref{sec:1d}, we give our results in the 1-D space, which
are generalized to the 2-D space in Section \ref{sec:2d}.
One may view Section \ref{sec:1d} as a ``warm-up'' for Section
\ref{sec:2d}. Section \ref{sec:AFN} extends our techniques to solve
the $\AFN$ queries.
Section \ref{sec:conclusion} concludes the paper.

For simplicity of discussion, we make a general
position assumption that no two
points in $P\cup Q$ have the same $x$- or $y$-coordinate for any query
$Q$; we also assume no two points of $P$ have the same aggregate distance to
$Q$.  Our techniques can be extended to the general case without these assumptions, although the discussion would be more tedious.

Throughout the paper, we use $Q$ to denote the uncertain query
point and assume $k<n$.
To simplify the notation, we will write $\Ed(p)$ for
$\Ed(p,Q)$, and $S_k(P)$ for $S_k(P,Q)$.
When we say ``the $\ANN$'', we mean the top-1 $\ANN$.
For any subset $P'\subseteq P$, denote by $S_k(P')$ the set of the
top-$k$ $\ANN$s of $Q$ in $P'$.
Let $W=\sum_{q\in Q}w(q)$. We assume $m< n$ always holds since otherwise we could compute $S_k(P)$ in $O((m+n)\log m)=O(m\log m)$ time by directly computing the aggregate distances for all points in $P$, and we omit the details.

\section{Top-$k$ $\ANN$ Searching in the 1-D Space}
\label{sec:1d}
In the $1$-D space, all points of $P$ lie on a real line $L$. We assume $L$ is the $x$-axis.
For any point $p$ on $L$, denote by $x(p)$ the coordinate of $p$ on $L$.
Consider any query set $Q=\{q_1,\ldots, q_m\}$ on $L$.
For any point $p$ on $L$, the aggregate distance from $p$ to $Q$ is
$\Ed(p)=\sum_{q\in Q}w(q)\dist(p,q)$, where $\dist(p,q)=|x(p)-x(q)|$.
Given any $Q$ and any $k$, our goal is to compute $S_k(P)$, i.e., the set of the
 top-$k$ $\ANN$s of $Q$ in $P$.

For a fixed query set $Q$, a point $p$ on $L$ is called a {\em global minimum point} if it minimizes the
aggregate distance $\Ed(p)$ among all points on $L$. Such a global minimum point on $L$ may not be unique.
The global minimum point is also known as weighted Fermat-Weber point \cite{ref:DurierGe85}, and as shown below, it is very easy to compute in our problem setting.

To find $S_k(P)$, we will use the following strategy. First, we
find a global minimum point $q^*$ on $L$.  Second, the point $q^*$
partitions $P$ into two subsets $P_{l}$ and $P_{r}$, for which we compute $S_k(P_{l})$ and $S_k(P_{r})$.
Finally, $S_k(P)$ is obtained by taking the first $k$ points after
merging $S_k(P_{l})$ and $S_k(P_{r})$.

Note that the points in $Q$ may not be given sorted on $L$.
Recall that $W=\sum_{q\in Q}w(q)$.
Let $q^*$ be the point in $Q$ such that
$$\sum_{x(q)<x(q^*), q\in Q}w(q)<W/2 \text{\ \ and\ \  }
w(q^*)+\sum_{x(q)<x(q^*), q\in Q}w(q)\geq W/2.$$ If we
view $w(q)$ as the weight of $x(q)$, then $q^*$ is the {\em weighted median}
of the set $\{x(q)\ |\ q\in Q\}$ \cite{ref:CLRS09}.
We claim that $q^*$ is a global
minimum point on $L$. To prove the claim, we first present Lemma
\ref{lem:1dmonotone}.
We say a function $f(x)$ is {\em monotonically increasing}
(resp., {\em decreasing}) if
$f(x_1)\leq f(x_2)$ for any $x_1\leq x_2$ (resp., $x_1\geq x_2$).

\begin{lemma}\label{lem:1dmonotone}
For any point $p$ on $L$ and $p\neq q^*$, if we move $p$ on $L$
towards $q^*$, the aggregate distance
$\Ed(p)$ is monotonically decreasing.
\end{lemma}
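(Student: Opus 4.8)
The plan is to regard $\Ed(p)$ as a univariate function of the coordinate $t=x(p)$, namely $g(t)=\sum_{q\in Q}w(q)\,|t-x(q)|$, and to exploit the fact that $g$ is convex. Each map $t\mapsto|t-x(q)|$ is convex, and $g$ is a nonnegative combination of such maps, so $g$ is a convex, piecewise-linear function of $t$ whose breakpoints lie among the coordinates $\{x(q)\mid q\in Q\}$. For a convex function on the line it suffices to locate a minimizer: $g$ is monotonically decreasing on $(-\infty,t_0]$ and monotonically increasing on $[t_0,\infty)$ for any global minimizer $t_0$. Hence, once we show that $t_0=x(q^*)$ is a global minimizer of $g$, the lemma follows at once. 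If $x(p)<x(q^*)$, then moving $p$ towards $q^*$ increases $t$ within the range $(-\infty,x(q^*)]$, on which $g$ is decreasing; and if $x(p)>x(q^*)$, then moving $p$ towards $q^*$ decreases $t$ within $[x(q^*),\infty)$, on which $g$ is increasing, so $g(t)$ again decreases.

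To verify that $x(q^*)$ minimizes $g$, I will compute the one-sided slopes of $g$, which is legitimate even at a breakpoint. For any $t$ that is not one of the $x(q)$'s, the slope of $g$ at $t$ equals $\sum_{x(q)<t}w(q)-\sum_{x(q)>t}w(q)=2\sum_{x(q)<t}w(q)-W$, and these slopes increase with $t$ by convexity. The first defining inequality of $q^*$, namely $\sum_{x(q)<x(q^*)}w(q)<W/2$, implies that every piece of $g$ lying in $(-\infty,x(q^*))$ has negative slope and that the left slope of $g$ at $x(q^*)$ equals $2\sum_{x(q)<x(q^*)}w(q)-W<0$; thus $g$ is decreasing on $(-\infty,x(q^*)]$. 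The second defining inequality, $w(q^*)+\sum_{x(q)<x(q^*)}w(q)\ge W/2$, implies that the right slope of $g$ at $x(q^*)$ equals $2\bigl(w(q^*)+\sum_{x(q)<x(q^*)}w(q)\bigr)-W\ge 0$, and hence every piece of $g$ lying in $(x(q^*),\infty)$ has nonnegative slope; thus $g$ is increasing on $[x(q^*),\infty)$. Therefore $x(q^*)$ is a global minimizer of $g$, as needed.

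I do not anticipate a genuine obstacle here; the only point requiring a little care is that $g$ fails to be differentiable at its breakpoints, which is why I phrase everything in terms of one-sided slopes. Equivalently, one could give a direct telescoping estimate of $g(t_1)-g(t_2)$ for $t_1<t_2\le x(q^*)$, splitting the sum over $q$ according to whether $x(q)\le t_1$, $t_1<x(q)<t_2$, or $x(q)\ge t_2$, bounding the contribution of the middle group below by $-(t_2-t_1)w(q)$, and invoking $\sum_{x(q)<t_2}w(q)\le\sum_{x(q)<x(q^*)}w(q)<W/2$; the case $x(q^*)\le t_1<t_2$ is symmetric. Either route reduces the claim to the two inequalities defining $q^*$ as the weighted median, with no further case analysis; in particular the general position assumption is not needed for this lemma.
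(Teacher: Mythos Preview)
Your proof is correct and covers the same ground as the paper's, but with a different packaging. Both arguments ultimately compute the slope of the piecewise-linear function $g(t)=\sum_{q}w(q)\,|t-x(q)|$ and show it is nonpositive to the left of $x(q^*)$ and nonnegative to the right. The paper proceeds interval by interval: it writes out $\Ed(p)$ explicitly (its Eq.~(1)), observes that on each open interval between consecutive $x(q_i)$ the coefficient of $x(p)$ is $\sum_{Q_L}w(q)-\sum_{Q_R}w(q)\le 0$, and then separately verifies that $\Ed$ does not increase when $p$ crosses a breakpoint. You instead invoke convexity of $g$ at the outset, which collapses the whole argument to checking the signs of the two one-sided slopes at $x(q^*)$ and eliminates the breakpoint-crossing case entirely. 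Your route is cleaner; the paper's route has the incidental benefit of producing the explicit linear expression for $\Ed(p)$ on each piece, which it reuses later (in the proof of Theorem~\ref{theo:1d} and in Lemma~\ref{lem:50}).
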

\begin{proof}
Without loss of generality, assume $p$ is on the left side of $q^*$
and we move $p$ on $L$ to the right towards $q^*$. The case where $p$
is on the right side of $q^*$ can be analyzed similarly.
At any moment during the movement of $p$, let $Q_L=\{q \in Q \ | x(q)\leq x(p)\}$ and $Q_R=Q\setminus Q_L$.
According to the definition of $\Ed(p)$, we have
\begin{equation}\label{Eq:1Dspace}
\begin{split}
\Ed(p) & = \sum_{q\in Q}|x(p)-x(q)|=\sum_{q\in Q_L} w(q)\cdot [x(p)-x(q)]+\sum_{q\in Q_R}
w(q)\cdot [x(q)-x(p)]\\
	& = \Big[\sum_{q\in Q_L}w(q)- \sum_{q\in Q_R}w(q)\Big]\cdot x(p)-\sum_{q\in Q_L}w(q)\cdot x(q)+\sum_{q\in
		Q_R}w(q)\cdot x(q).\\
\end{split}
\end{equation}

Because $p$ is to the left of $q^*$, according to the definition
of $q^*$, $\sum_{q\in Q_L}w(q)\leq W/2\leq  \sum_{q\in Q_R}w(q)$ holds.
Further, as $p$ moves to the right towards $q^*$, the value $x(p)$ is
monotonically increasing.

Suppose $p$ is between two points $q_i$ and $q_j$ of $Q$ such that $x(q_i)\leq x(p)<x(q_j)$ and there are no other points of $Q$ between $q_i$ and $q_j$. Note that it is possible that such a point $q_i$ does not exist (i.e., no point of $Q$ is on the left side of $p$), in which case we let $x(q_i)=-\infty$.

If $p$ moves in the interval $[x(q_i),x(q_j))$ to the right, then both sets $Q_L$ and $Q_R$ stay the same, and thus, the value $[\sum_{q\in Q_L}w(q)- \sum_{q\in Q_R}w(q)]\cdot x(p)$ is monotonically decreasing and neither $\sum_{q\in
Q_L}w(q)\cdot x(q)$ nor $\sum_{q\in Q_R}w(q)\cdot x(q)$ changes. Therefore, if $p$ moves in the interval $[x(q_i),x(q_j))$ to the right, $\Ed(p)$ is monotonically decreasing. We claim that for any $p$ in $[x(q_i),x(q_j))$, it always holds that $\Ed(p)\geq \Ed(q_j)$, which leads to the lemma. Indeed, it can be verified that $\Ed(q_j)-\Ed(p)=(x(q_j)-x(p))\cdot [\sum_{q\in Q_L}w(q)- \sum_{q\in Q_R}w(q)]$. Since $x(q_j)-x(p)>0$ and $[\sum_{q\in Q_L}w(q)- \sum_{q\in Q_R}w(q)]\leq 0$, we obtain that $\Ed(q_j)-\Ed(p)\leq 0$.

The lemma thus follows.
\end{proof}

Lemma \ref{lem:1dmonotone} implies that $\Ed(p)$
attains a global minimum at $p=q^*$. Hence, the point $q^*$ is a
global minimum point on $L$.

Next, we find the set $S_k(P)$ with the help of $q^*$
and Lemma \ref{lem:1dmonotone}.
Let $P_l = \{ p \in P \mid x(p) \leq x(q^{*})\}$ and $P_r=P\setminus
P_l$. We find the set $S_{k}(P_{r})$ of top-$k$ $\ANN$s of $Q$ in $P_{r}$
by scanning the sorted list of $P_r$ from left to right and reporting
the first $k$ scanned points. $S_{k}(P_{l})$ can be obtained
similarly. Among the $2k$ points obtained above, we report the set of
$k$ points with the smallest aggregate distances to $Q$ as $S_k(P)$. We deduce the following theorem.

\begin{theorem}\label{theo:1d}
Given a set $P$ of $n$ points on the real line $L$, with
$O(n\log n)$ preprocessing time and $O(n)$ space, the top-$k$ $\ANN$s
can be found in $O(\min\{k,\log m\}\cdot m+k+\log n)$ time
for any weighted set $Q$ and integer $k$; if the points of $Q$
are given sorted on $L$, then the query time is $O(k+ m+ \log
n)$.
\end{theorem}

\begin{proof}
The only preprocessing is to sort the points in $P$ from left
to right, which takes $O(n\log n)$ time and $O(n)$ space.

Given any query $Q$ and any $k$, we first compute the point $q^*$,
in $O(m)$ time by the weighted selection algorithm \cite{ref:CLRS09}.
The sorted lists of $P_l$ and $P_r$ can be obtained implicitly in $O(\log n)$
time by determining the two neighboring points of $q^{*}$ in the sorted list $P$.
If $k=1$, it is sufficient to consider the two neighboring points of $q^*$ in $P$: we compute, in $O(m)$ time, their  aggregate distances to $Q$, and return the point with smaller  aggregate
distance. Hence, the total time for finding $\ANN$ is $O(m+\log n)$.
Below, we compute $S_k(P)$ for general $k$.

\def\TwoKValues{\Psi}
For simplicity of discussion, we assume $|P_l|\geq k$ and $|P_r|\geq k$.
We first compute the set $S_{k}(P_{r})$ of top-$k$ $\ANN$s of $Q$ in $P_r$ by scanning the sorted list of $P_r$ from left to right and report the first $k$ points. $S_{k}(P_{l})$ can be obtained similarly. Among the found $2k$ points in $S_{k}(P_{l}) \cup S_{k}(P_{r})$, we report the set of $k$ points with the smallest  aggregate distances to $Q$ as $S_k(P)$.
Let $\TwoKValues_l = \{ \Ed(p) \mid p\in S_{k}(P_{l})\}$ denote the set of $k$  aggregate distance values $\Ed(p)$ of all points $p$ in $S_{k}(P_{l})$. Similarly, we define $\TwoKValues_r$. Set $\TwoKValues = \TwoKValues_l \cup \TwoKValues_r$. If we know $\TwoKValues$, the final step can be done easily in $O(k)$ time. $\TwoKValues$ can be computed in $O(mk)$ time in a straightforward way, leading to $O(mk+\log n)$ overall query time. In the following, we show that $\TwoKValues$ can be computed in $O(m\log m+k)$ time, which leads to $O(m\log m+k+\log n)$ overall query time, and further, if $Q$ is given sorted, $\TwoKValues$ can be computed in $O(m+k)$ time.


The $m$ points in $Q$ partition $L$ into $m+1$ intervals and an easy
observation is that the  aggregate distance $\Ed(p)$ changes
linearly as $p$ changes in each interval. Specifically, consider
computing $\Ed(p)$ for any given point $p$: if we know the four values $\sum_{q\in Q_L}w(q)$, $\sum_{q\in
Q_R}w(q)$, $\sum_{q\in Q_L}w(q) x(q)$, and $\sum_{q\in
Q_R}w(q) x(q)$ in Eq.~(\ref{Eq:1Dspace}) in the proof of Lemma \ref{lem:1dmonotone}, then $\Ed(p)$ can be computed in constant time. In order to utilize this, we preprocess $Q$ as follows.

We sort the points of $Q$ from left to right and assume the sorted
list is $q_1,q_2,\ldots,q_m$. For each $1\leq j\leq m$, we
compute the four values $\sum_{i=1}^jw(q_i)$,
$\sum_{i=1}^jw(q_i)x(q_i)$, $\sum_{i=j}^m w(q_i)$, and
$\sum_{i=j}^mw(q_i)x(q_i)$. Note that all these $4m$ values can be computed
in $O(m)$ time (after $Q$ is sorted). Then, given any point $p$, if we
know the index $i$ such that $x(q_i)\leq x(p)<x(q_{i+1})$, then
$\Ed(p)$ can be computed in constant time.

Now, we compute $\TwoKValues$. Let $Q_l = \{ q \in Q \mid x(q) \leq x(q^*)\}$ and $Q_r=Q\setminus Q_l$.
After having $q^*$, $Q_l$ and $Q_r$ can
be obtained implicitly in $O(\log m)$ time by binary search on the
sorted list of $Q$. Recall that we scan the points
in $P_r$ from left to right to find $S_k(P_r)$.
If we scan both $P_r$ and $Q_r$ simultaneously, then at the moment of scanning any point, say $p$,
we already know the index $i$ such that $x(q_i)\leq x(p)<x(q_{i+1})$ and thus $\Ed(p)$ can be computed in constant time. Hence, $\TwoKValues_r$ can be computed in $O(m+k)$ time, so can $\TwoKValues_l$. In this way, the total time for computing $\TwoKValues$ is $O(k+m\log m)$. Note that if $Q$ is given sorted on $L$, the query time becomes $O(k+m)$.

The theorem thus follows.
\end{proof}


\section{Top-$k$ $\ANN$ Searching in the Plane}
\label{sec:2d}

In this section, we present our results in two-dimensional space,
where the input point set $P$ and the query point $Q$ are given in the
plane.

We generalize the techniques in Section \ref{sec:1d}. For any query $Q$, we first find a
global minimum $q^*$ in the plane.
Then, for each quadrant $R$ of the four quadrants with respect to $q^*$ (i.e., the
four quadrants partitioned by the vertical line and the horizontal
line through $q^*$), we find the top-$k$ $\ANN$s of $Q$ in $P\cap R$ (i.e., $S_k(P \cap R)$)
and compute the aggregate distance values $\Ed(p)$ for all $p \in S_k(P \cap R)$; among the found $4k$
points, we report the set of $k$ points with smallest aggregate distances to $Q$ as $S_k(P)$.
Note that we view each quadrant as a closed region including its
two bounding half-lines (with the common endpoint~$q^*$).

We describe our algorithm for the first quadrant, and the other three
quadrants can be treated in a similar manner. Let $P^1 \subseteq P$ be
the set of points lying in the first quadrant, i.e., $P^1=\{p \in P\
|\ x(p)\geq x(q^*), y(p)\geq y(q^*)\}$.  Our goal is to find
$S_k(P^1)$, the set of top-$k$ $\ANN$s of $Q$ in $P^1$.
Let $z_i$ denote the $i$-th $\ANN$ of $Q$ in $P^1$. Our algorithm computes $S_k(P^1)$ in the order of
$z_1,z_2,\ldots,z_k$.

The problem here is more difficult than that in the 1-D case. For
example, in the 1-D case, for all $i\leq k$, the $i$-th $\ANN$ in $P_l$
or $P_r$ can be easily found by scanning a sorted list. Here, in
contrast, by proving a monotonicity property as Lemma
\ref{lem:1dmonotone}, we show that the $i$-th $\ANN$ $z_i$ in $P^1$ must
be on a ``skyline'' (to be defined later) and we need to somehow search the ``skyline''. After $z_i$ is found, we determine a new
skyline without considering $z_1, \ldots, z_i$, and then find $z_{i+1}$ by searching the new skyline.
This procedure continues until $z_k$ is obtained.
Advanced data structures (e.g., compact interval trees \cite{ref:GuibasCo91} and segment-dragging query data structure
\cite{ref:ChazelleAn88}) are also used for efficient implementations.

Consider any query set $Q=\{q_1,q_2,\ldots,q_m\}$ and any $k$.
For any point $p$ in the plane, denote by $x(p)$
the $x$-coordinate of $p$ and by $y(p)$ the $y$-coordinate of $p$.
The aggregate distance of $p$ to $Q$ is $\Ed(p)=\sum_{q\in Q}w(q)\dist(p,q)$,
where $\dist(p,q)=|x(p)-x(q)|+|y(p)-y(q)|$.
Our goal is to find the top-$k$ $\ANN$ set $S_k(P^1)$ in the first quadrant.

\subsection{The Global Minimum Point and the Monotonicity Property}

A point $p$ in the plane is called a {\em global minimum point} if it minimizes the
aggregate distance $\Ed(p)$ among all points in the plane.
Below, we first find a global minimum point and prove a monotonicity property.
Recall that $W=\sum_{q\in Q}w(q)$.  Let $q_x^* \in Q$ be the point such
that
$$\sum_{x(q)<x(q_x^*), q\in Q}w(q)<W/2
\text{\ \ and\ \ } w(q_x^*)+\sum_{x(q)<x(q_x^*), q\in Q}w(q)\geq W/2.$$
If we view $w(q)$ as the weight of $x(q)$ for each $q\in Q$,
then $x(q_x^*)$ is the weighted median of the set $\{x(q) \ |\
q\in Q\}$ \cite{ref:CLRS09}. Similarly,
 let $q_y^*$ be the point
 in $Q$ such that
 $$\sum_{y(q)<y(q_y^*),\ q\in Q}w(q)<W/2 \text{\ \ and\ \ }
 w(q_y^*)+\sum_{y(q)<y(q_y^*),\ q\in Q}w(q)\geq W/2.$$
We
claim that $q^* = (x(q^*_x), y(q^*_y))$ is a global minimum point.
To prove the claim, we first present Lemma
\ref{lem:2dmonotone}, which generalizes Lemma
\ref{lem:1dmonotone}.
A path in the plane is {\em monotone} if
we move from one endpoint of it to the other,
the $x$-coordinate (resp. $y$-coordinate) is
monotonically changing (either increasing or decreasing).

\begin{lemma}\label{lem:2dmonotone}
For any point $p$ in the plane with $p\neq q^*$,
if we move $p$ towards $q^*$ along a
monotone path, the aggregate distance $\Ed(p)$ is monotonically
decreasing.
\end{lemma}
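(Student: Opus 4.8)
The plan is to exploit the separability of the $L_1$ metric into its two coordinate directions. Writing $p=(x(p),y(p))$ and $q=(x(q),y(q))$, we have
\[
\Ed(p)=\sum_{q\in Q}w(q)\,|x(p)-x(q)|\;+\;\sum_{q\in Q}w(q)\,|y(p)-y(q)|\;=:\;g(x(p))+h(y(p)),
\]
so $\Ed$ is the sum of a function $g$ depending only on the $x$-coordinate of $p$ and a function $h$ depending only on the $y$-coordinate of $p$. The key observation is that $g$ is precisely the $1$-D expected-distance function for the point set $\{x(q)\mid q\in Q\}$ with weights $w(q)$ (total weight $W$), and by the definition of $q_x^*$ the value $x(q_x^*)=x(q^*)$ is the weighted median of this set; symmetrically, $h$ is the $1$-D expected-distance function for $\{y(q)\mid q\in Q\}$, whose weighted median is $y(q_y^*)=y(q^*)$.

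Next I would invoke Lemma \ref{lem:1dmonotone} twice. Let $\pi$ be any monotone path from $p$ to $q^*$, and consider a point moving along $\pi$ from $p$ to $q^*$. By the definition of a monotone path, the $x$-coordinate of the moving point changes monotonically from $x(p)$ to $x(q^*)$; that is, it moves monotonically towards the weighted median $x(q_x^*)$, so Lemma \ref{lem:1dmonotone} (applied on the real line carrying $\{x(q)\mid q\in Q\}$) guarantees that $g$ is monotonically decreasing along $\pi$. The identical argument on the $y$-axis shows that $h$ is monotonically decreasing along $\pi$. Since $\Ed=g+h$ along $\pi$, and a sum of monotonically decreasing functions is monotonically decreasing, $\Ed(p)$ is monotonically decreasing as $p$ moves along $\pi$ towards $q^*$, which is exactly the claim. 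As an immediate corollary, $\Ed$ attains a global minimum at $q^*$, since $q^*$ is reachable from any $p$ along a monotone path (e.g., an axis-parallel one), so $q^*$ is a global minimum point, as asserted after the lemma.

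I do not expect a genuine obstacle here; the only care needed is in two bookkeeping points. First, a monotone path need not be \emph{strictly} monotone in a coordinate, so a coordinate may be constant on a subpath; this is harmless precisely because Lemma \ref{lem:1dmonotone} is stated with non-strict (``monotonically non-increasing'') decrease and therefore applies verbatim. Second, since the path is required to be monotone in \emph{both} coordinates at once, $g$ and $h$ decrease on the same subpaths of $\pi$, so no cancellation between a decrease in one coordinate and an increase in the other can occur, and the monotonicity of the sum is unambiguous.
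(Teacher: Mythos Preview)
Your proof is correct and follows essentially the same approach as the paper: decompose $\Ed(p)$ into $\Ed_x(x(p))+\Ed_y(y(p))$ via the separability of the $L_1$ metric, then apply Lemma~\ref{lem:1dmonotone} to each coordinate projection. The additional care you take with non-strict monotonicity and simultaneous coordinate movement is fine but not strictly needed beyond what the paper's one-line reduction already implies.
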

\begin{proof}
According to the definition of $\Ed(p)$, we have
\begin{equation*}
\begin{split}
\Ed(p)
& = \sum_{q\in Q}w(q)\cdot \dist(p,q)
= \sum_{q\in Q}w(q)\cdot \Bigl(|x(p)-x(q)|+|y(p)-y(q)|\Bigr) \\
& = \sum_{q\in Q}w(q)\cdot |x(p)-x(q)| +
\sum_{q\in Q}w(q)\cdot |y(p)-y(q)|\\
& = \Ed_x(x(p))+\Ed_y(y(p)),
\end{split}
\end{equation*}
where
$$\Ed_x(x(p))=\sum_{q\in Q}w(q)\cdot |x(p)-x(q)|\text{ and }\Ed_y(y(p)) = \sum_{q\in Q}w(q)\cdot |y(p)-y(q)|.$$
If we move $p$ towards $q^*$ along a monotone path, on the $x$-projection, we are moving $x(p)$ towards $x(q^*)$. By Lemma \ref{lem:1dmonotone}, $\Ed_x(x(p))$ is monotonically decreasing, so is $\Ed_y(y(p))$. The lemma thus follows.
%
%
\end{proof}

Lemma \ref{lem:2dmonotone} implies that $\Ed(p)$ attains a global minimum at $p=q^*$.
Hence, the point $q^*$ is a global minimum point in the plane (note that $q^*$ is not necessarily in $Q$).
Next, based on the point $q^*$ and Lemma \ref{lem:2dmonotone}, we
introduce the minimal points and the skyline, and present some observations.

\subsection{The Minimal Points and the Skyline}

We first show how to find $z_1$ (i.e., the $\ANN$ of $Q$ in $P^1$). For any two different points $p_1$ and $p_2$ in $P^1$, we say that $p_1$
{\em dominates} $p_2$ if and only if $x(p_1)\leq x(p_2)$ and $y(p_1)\leq
y(p_2)$. A point $p$ in $P^1$ is called a {\em minimal point} if no point in $P^1$
dominates $p$ (note that the ``minimal'' here is different from the ``global minimum'' defined earlier). If $p_1\in P^1$ dominates $p_2\in P^1$, then there exists a monotone
path $\pi$ connecting $p_2$ and $q^*$ such that $p_1 \in \pi$ (see Fig.~\ref{fig:minpoints}). By
Lemma \ref{lem:2dmonotone}, $\Ed(p_1)\leq \Ed(p_2)$.
Therefore, to compute $z_1$, we only need to consider the set of minimal points in $P^1$, denoted by $M$.
Our discussion above leads to the following lemma.

\begin{figure}[t]
\begin{minipage}[t]{\linewidth}
\begin{center}
\includegraphics[totalheight=1.2in]{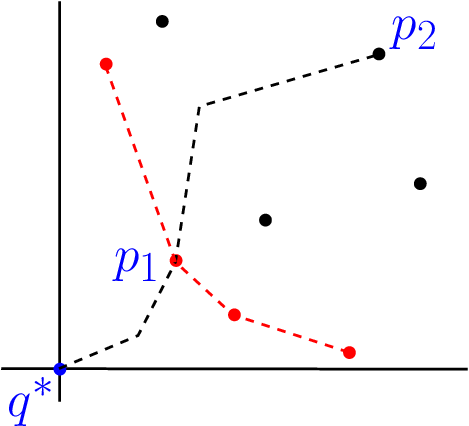}
\caption{\footnotesize The four (red) points connected by the dashed
lines are minimal points, and the dashed line connecting them is a
skyline. $p_1$ dominates $p_2$ and the dotted curve connecting $q^*$
and $p_2$ is a monotone path.}
\label{fig:minpoints}
\end{center}
\end{minipage}
\vspace*{-0.15in}
\end{figure}

\begin{lemma}\label{lem:40}
$z_1 = \underset{p\in M}{\operatorname{argmin}}~\Ed(p)$.
\end{lemma}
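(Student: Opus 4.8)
The plan is to show that the minimum of $\Ed$ over $P^1$ is attained at a minimal point, so that restricting attention to $M$ loses nothing. The key fact is already in place: Lemma~\ref{lem:2dmonotone} says $\Ed$ is monotonically decreasing along any monotone path heading toward $q^*$. So I would argue that for every point of $P^1$ there is a dominating minimal point with no larger expected distance, and then that the argmin over $M$ is therefore the argmin over all of $P^1$, i.e.\ it is $z_1$.

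Concretely, first I would observe that $M\subseteq P^1$ is nonempty whenever $P^1$ is nonempty: starting from any $p\in P^1$, repeatedly replace $p$ by a point that dominates it; since $P^1$ is finite this terminates at a minimal point. Second, I would verify the claim made just before the lemma statement, namely that if $p_1\in P^1$ dominates $p_2\in P^1$ then there is a monotone path $\pi$ from $p_2$ to $q^*$ passing through $p_1$. This is the geometric heart of the argument. Since $p_1, p_2 \in P^1$, both lie in the closed first quadrant with corner $q^*$, so $x(q^*)\le x(p_1)\le x(p_2)$ and $y(q^*)\le y(p_1)\le y(p_2)$; hence the axis-aligned path from $p_2$ going left to $(x(p_1),y(p_2))$ and then down to $p_1$ is monotone in both coordinates and lies inside the quadrant, and it can be continued monotonically from $p_1$ down-and-left to $q^*$ (e.g.\ left to $(x(q^*),y(p_1))$ then down to $q^*$). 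Concatenating gives the required $\pi$. By Lemma~\ref{lem:2dmonotone} applied to the subpath from $p_2$ to $p_1$, we get $\Ed(p_1)\le \Ed(p_2)$.

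Third, I would combine these: given any $p\in P^1$, pick a minimal point $p'\in M$ that dominates $p$ (obtained by the termination argument above, noting that dominance is transitive and reflexive in the weak sense used here, so the final minimal point still dominates the original $p$); then $\Ed(p')\le \Ed(p)$. Consequently $\min_{p\in P^1}\Ed(p)=\min_{p\in M}\Ed(p)$, and since by definition $z_1=\operatorname{argmin}_{p\in P^1}\Ed(p)$ and the general-position assumption (no two points of $P$ have the same expected distance to $Q$) makes the minimizer unique, the minimizer lies in $M$, i.e.\ $z_1=\operatorname{argmin}_{p\in M}\Ed(p)$.

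The main obstacle is really just making the ``monotone path through the dominating point'' claim airtight, including the detail that the path stays inside the first quadrant (so that Lemma~\ref{lem:2dmonotone}, whose statement only concerns monotone paths to $q^*$ in the whole plane, applies without subtlety) and the bookkeeping that a chain of dominations collapses to a single minimal point that still dominates the start. Everything else is immediate from Lemma~\ref{lem:2dmonotone} and finiteness of $P$.
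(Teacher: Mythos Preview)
Your proposal is correct and follows essentially the same approach as the paper: the paper's argument (given in the paragraph immediately preceding the lemma) is exactly that if $p_1$ dominates $p_2$ then a monotone path from $p_2$ to $q^*$ passes through $p_1$, so Lemma~\ref{lem:2dmonotone} gives $\Ed(p_1)\le\Ed(p_2)$, whence the minimum over $P^1$ is attained in $M$. You have simply fleshed out the details (nonemptiness of $M$, the explicit staircase path, uniqueness via the general-position assumption) that the paper leaves implicit.
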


One tempting approach is to first find the set $M$ and then find $z_1$.
Unfortunately, here $M$ may have $\Theta(n)$ points
and we cannot afford to check every point of $M$.
Below, we give a better approach.

For each $q\in Q$, we induce a horizontal line and a vertical line
through $q$, respectively; let $\calA$ be the arrangement of the
resulting $2m$ lines. Each cell of $\calA$ is a (possibly unbounded)
rectangle. Each point in $Q$ is a vertex of $\calA$. Note that our
algorithm does not explicitly compute
$\calA$.

We will show below that $\Ed(p)$ is a linear function of $x(p)$ and
$y(p)$ inside any cell $C$ of $\calA$, implying that the $\ANN$ (i.e.,
		the top-1 $\ANN$) of $Q$ in
$P\cap C$ is on the
convex hull of $P\cap C$, as discussed in \cite{ref:AgarwalNe12}.

For any cell $C$, suppose $C = [x_{l}, x_{r}] \times [y_{b}, y_{t}]$.
Set $Q_L = \{ q \in Q \mid x(q) \leq x_{l} \}$, $Q_R = \{ q \in Q \mid x(q) \geq x_{r} \}$, $Q_B = \{ q \in Q \mid y(q) \leq y_{b} \}$ and $Q_T = \{ q \in Q \mid y(q) \geq y_{t} \}$.
According to the construction of $\calA$, no point of $Q$ lies strictly inside~$C$ and $Q=Q_L\cup Q_R =Q_B\cup Q_T$.

We have the following lemma.

\begin{lemma}\label{lem:50}
For any point $p$ in the cell $C$,
$\Ed(p)=C_a\cdot x(p)+C_b\cdot y(p)+C_c$, where
$$C_a=\sum_{q\in Q_L}w(q)-\sum_{q\in Q_R} w(q),\
C_b=\sum_{q\in Q_B}w(q)-\sum_{q\in Q_T} w(q),\ \text{and}$$
$$C_c=\sum_{q\in Q_R}w(q)x(q)-\sum_{q\in Q_L}w(q)x(q)
+\sum_{q\in Q_T}w(q)y(q)-\sum_{q\in Q_B}w(q)y(q).$$
Further, with $O(m\log m)$ time
preprocessing on $Q$, given any cell $C$ of $\calA$, we can compute
$C_a$, $C_b$, and $C_c$ in $O(\log m)$ time.
\end{lemma}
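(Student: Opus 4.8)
The plan is to reuse the orthogonal decomposition $\Ed(p)=\Ed_x(x(p))+\Ed_y(y(p))$ established in the proof of Lemma~\ref{lem:2dmonotone}, and to observe that inside a single cell of $\calA$ every absolute value occurring in $\Ed_x$ and $\Ed_y$ resolves to a fixed sign. First I would fix a cell $C=[x_l,x_r]\times[y_b,y_t]$. Since $C$ is a cell of $\calA$, no vertical line through a point of $Q$ meets the interior of $C$, so every $q\in Q$ satisfies $x(q)\le x_l$ or $x(q)\ge x_r$; that is, $Q=Q_L\cup Q_R$, exactly as recorded just before the lemma. For $p\in C$ we have $x_l\le x(p)\le x_r$, hence $|x(p)-x(q)|=x(p)-x(q)$ for $q\in Q_L$ and $|x(p)-x(q)|=x(q)-x(p)$ for $q\in Q_R$. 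Substituting these into $\Ed_x(x(p))=\sum_{q\in Q}w(q)|x(p)-x(q)|$ and grouping the terms according to whether they multiply $x(p)$ gives $\Ed_x(x(p))=\bigl(\sum_{q\in Q_L}w(q)-\sum_{q\in Q_R}w(q)\bigr)x(p)+\sum_{q\in Q_R}w(q)x(q)-\sum_{q\in Q_L}w(q)x(q)$, which is linear in $x(p)$. The identical argument applied to the $y$-coordinates, using $Q=Q_B\cup Q_T$, yields the analogous linear expression for $\Ed_y(y(p))$. Adding the two and reading off the coefficient of $x(p)$, the coefficient of $y(p)$, and the constant term produces exactly $C_a$, $C_b$, and $C_c$.

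For the second part — the preprocessing and the $O(\log m)$ evaluation — I would keep two sorted copies of $Q$, one by $x$-coordinate and one by $y$-coordinate, each with prefix sums. Concretely, writing $q_{(1)},\dots,q_{(m)}$ for $Q$ sorted by increasing $x$-coordinate, I precompute the prefix sums $\sum_{i\le j}w(q_{(i)})$ and $\sum_{i\le j}w(q_{(i)})x(q_{(i)})$ for all $j$, and symmetrically in the $y$-sorted order using $w(q)y(q)$; sorting dominates, so this costs $O(m\log m)$ time and $O(m)$ space. The bounding coordinates of any cell $C$ of $\calA$ are themselves $x$- (resp. $y$-) coordinates of points of $Q$ or $\pm\infty$, so $Q_L$ is a prefix and $Q_R$ a suffix of the $x$-sorted list, and likewise $Q_B,Q_T$ in the $y$-sorted list. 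Given $C$, one binary search per coordinate locates the splitting indices in $O(\log m)$ time; then, using that $\sum_{q\in Q}w(q)=W$ and $\sum_{q\in Q}w(q)x(q)$ are known, each of $\sum_{q\in Q_L}w(q)$, $\sum_{q\in Q_R}w(q)$, $\sum_{q\in Q_L}w(q)x(q)$, $\sum_{q\in Q_R}w(q)x(q)$ and their $y$-analogues follows from one prefix-sum lookup and a subtraction in $O(1)$ time. Hence $C_a$, $C_b$, $C_c$ are obtained in $O(\log m)$ time overall.

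The algebraic identity in the first part is a one-line computation once the sign pattern is pinned down, so it is not really an obstacle. The only place needing mild care is the boundary bookkeeping in the second part: for an unbounded cell the relevant endpoint is $\pm\infty$ and the corresponding one of $Q_L,Q_R,Q_B,Q_T$ must be taken as empty or as all of $Q$; and under the general-position assumption the finite bounding coordinates of $C$ are distinct coordinates of two specific points of $Q$, which determines unambiguously on which side of each split those two points lie. Dealing with these cases is routine, so I expect no genuine difficulty.
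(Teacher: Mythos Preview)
Your proposal is correct and matches the paper's approach essentially line for line: the paper omits the first part by pointing back to the decomposition in Lemma~\ref{lem:2dmonotone} (exactly as you do), and for the second part it sorts $Q$ by $x$- and $y$-coordinates, stores prefix sums of $w(q)$ and $w(q)x(q)$ (resp.\ $w(q)y(q)$), and recovers $C_a,C_b,C_c$ via a binary search plus $O(1)$ lookups. If anything, your write-up of the first part is more explicit than the paper's, and your handling of unbounded cells and boundary ties is a welcome addition.
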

\begin{proof}
The first part (i.e., computing the values of $C_a$, $C_b$, and $C_c$)
has been discussed in \cite{ref:AgarwalNe12}
and it can also be easily verified by our analysis in Lemma \ref{lem:2dmonotone}.
Hence, we omit the proof for it.

For the second part, given any cell $C$, our goal is
to compute the three values $C_a$, $C_b$, and $C_c$.
Generally speaking, if, as preprocessing,
we compute the prefix sums of the values $w(q)$
and $w(q)x(q)$ in the sorted list of the points of $Q$ by
their $x$-coordinates, and compute the prefix sum of $w(q)y(q)$
in the sorted list of the points of $Q$ by
their $y$-coordinates, then $C_a$, $C_b$, and $C_c$ can be computed
in $O(\log m)$ time. The details are given below.

To compute $C_a$, we need to know the value $\sum_{q\in Q_L}w(q)$ and
the value $\sum_{q\in Q_R} w(q)$. Note that $\sum_{q\in Q_R}
w(q)=W-\sum_{q\in Q_L}w(q)$.
We can do the following preprocessing.
We sort all points in $Q$ by their $x$-coordinates. Suppose the
sorted list is $q_1,q_2,\ldots,q_m$ from left to right. For each $1\leq j\leq m$, we compute the value $W_1(q_j)=\sum_{i=1}^jw(q_i)$.
For any given cell $C$, let $x_l$ be the $x$-coordinate of the
vertical line containing the left side of $C$. By binary search on the
sorted list $q_1,q_2,\ldots,q_m$, in $O(\log m)$ time,
we can find the rightmost
point $q'$ in $Q$ such that $x(q')\leq x_l$. It is easy to see that
$\sum_{q\in Q_L} w(q)=W_1(q')$.
Note that the above preprocessing takes $O(m\log m)$ time, and $C_a$
can be computed in $O(\log m)$ time.

In similar ways, we can compute $C_b$ and $C_c$ in $O(\log m)$ time,
with $O(m\log m)$ preprocessing  time.
Hence, the second part of the lemma follows.
\end{proof}

As discussed in \cite{ref:AgarwalNe12}, Lemma \ref{lem:50} implies
that $S_1(P\cap C)$ (i.e., the $\ANN$ of $Q$ in $P\cap C$) is on the
convex hull of $P\cap C$. More specifically, $S_1(P\cap C)$ is an
extreme point of $P\cap C$ along a certain direction that is
determined by $C_a$ and $C_b$, and thus we can do binary search on
the convex hull to find it.

To compute $z_1$, the algorithm in \cite{ref:AgarwalNe12} checks
every cell $C$ of $\calA$ in the first quadrant, and it finds $S_1(P\cap C)$ by doing binary search on the
convex hull of the points in $P\cap C$.  The number of cells checked in
\cite{ref:AgarwalNe12} is $O(m^2)$.  In contrast, based on
Lemma \ref{lem:40}, we show below that we only need to check $O(m)$ cells.
Although the number of minimal points in $M$ can be $\Theta(n)$, we
show that the number of cells of $\calA$ that contain these
minimal points is $O(m)$, and further, we can find these cells efficiently.

If we order the points in $M$ by their $x$-coordinates and connect
every pair of adjacent points by a line segment, then we can obtain a
path $\pi_1$, which we call a  {\em skyline} (see Fig.~\ref{fig:minpoints}).
The points of $M$ are also considered as the {\em vertices} of $\pi_1$.
If we move on $\pi_1$ from its left endpoint to its right endpoint,
then the $x$-coordinate is monotonically increasing and the
$y$-coordinate is monotonically decreasing. Hence, $\pi_1$ is a monotone path.

Denote by $\calC_1$ the set of cells of $\calA$ that contain the
minimal points in $M$.

\begin{lemma}\label{lem:60}
$|\calC_1| = O(m)$.
\end{lemma}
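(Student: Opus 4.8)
The plan is to use that the skyline $\pi_1$ through the minimal points of $M$ is an $x$-monotone, $y$-monotone polygonal path and that $\calA$ is essentially a grid. First I would note that, by the general position assumption (no two points of $P\cup Q$ agree in an $x$- or a $y$-coordinate), $\calA$ is the arrangement of exactly $m$ distinct vertical lines and $m$ distinct horizontal lines, namely one vertical and one horizontal line through each of the $m$ points of $Q$. Since every cell in $\calC_1$ contains a minimal point and every minimal point of $M$ is a vertex of $\pi_1$, each cell of $\calC_1$ is a cell of $\calA$ that $\pi_1$ intersects; hence it suffices to bound the number of cells of $\calA$ met by $\pi_1$.

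To bound that, I would walk along $\pi_1$ from its left endpoint to its right endpoint, passing through a sequence of cells $C_0,C_1,\dots$ of $\calA$. In passing from $C_j$ to $C_{j+1}$ the path crosses a piece of some vertical line, of some horizontal line, or a vertex of $\calA$; charge this transition to a vertical line of $\calA$ when the crossing is through a vertical edge or through a vertex, and to a horizontal line otherwise. Because each segment of $\pi_1$ joins two minimal points with distinct $x$-coordinates (and, since neither dominates the other, with decreasing $y$-coordinate), $\pi_1$ is strictly $x$-increasing and strictly $y$-decreasing; consequently $\pi_1$ meets each vertical line of $\calA$ at most once and each horizontal line at most once. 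So at most $m$ transitions are charged to vertical lines and at most $m$ to horizontal lines, whence $\pi_1$ makes at most $2m$ cell transitions and meets at most $2m+1$ cells. Therefore $|\calC_1|\le 2m+1=O(m)$.

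The only delicate point is the degenerate case in which $\pi_1$ passes exactly through a vertex of $\calA$: then one transition jumps to a diagonally opposite cell, but it is still a single transition charged to a single line (say the vertical line through that vertex), and since $\pi_1$ crosses that line only once nothing is double-charged, so the count is unaffected. I expect this monotonicity/crossing argument to be essentially the whole proof; no real computation is needed, and the only thing to get right is the bookkeeping of which line each cell transition is charged to. This $O(m)$ bound is exactly what will later let the algorithm afford to examine only the cells of $\calC_1$, rather than all $\Theta(m^2)$ cells of $\calA$ as in \cite{ref:AgarwalNe12}, when locating $z_1$, and similarly for the successive $z_i$ after each skyline update.
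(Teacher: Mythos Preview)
Your proof is correct and follows essentially the same approach as the paper: both use that the monotone skyline $\pi_1$ can cross each of the $2m$ lines of $\calA$ at most once, so it meets only $O(m)$ cells. Your version is a bit more explicit (the charging of cell transitions to lines and the handling of the degenerate vertex case), but the argument is the same.
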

\begin{proof}
Due to our general position assumption that no two points in $P\cup Q$ have
the same $x$-coordinate or $y$-coordinate. Each edge of $\pi_1$ is
neither horizontal nor vertical.
Because $\pi_1$ is a monotone path, each line of $\calA$ can intersect
$\pi_1$ at most once. Hence, the number of intersections between $\pi_1$
and $\calA$ is $O(m)$, which implies that
the number of cells that intersect $\pi_1$ is
$O(m)$. Since all points in $M$ are on $\pi_1$, the lemma follows.
\end{proof}

 Due to Lemma \ref{lem:40}, the following lemma
is obvious.

\begin{lemma}\label{lem:65}
The point $z_1$ is in one of the cells of $\calC_1$.
\end{lemma}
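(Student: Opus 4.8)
The plan is to read the statement off directly from Lemma~\ref{lem:40} together with the general position assumption. First I would recall that Lemma~\ref{lem:40} asserts $z_1 = \operatorname{argmin}_{p\in M}\Ed(p)$, so in particular $z_1$ is itself a minimal point of $P^1$, i.e., $z_1\in M$. Next I would invoke general position: the lines of $\calA$ are exactly the horizontal and vertical lines through the points of $Q$, and since no point of $P$ shares an $x$- or $y$-coordinate with any point of $Q$, the point $z_1\in P^1\subseteq P$ lies on no line of $\calA$ and hence lies in the interior of a unique cell $C$ of $\calA$. Finally, since $z_1\in M$ and $z_1\in C$, the cell $C$ contains a minimal point of $M$, so by the very definition of $\calC_1$ we have $C\in\calC_1$, which is the claim.

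The hard part is, frankly, nonexistent: the only point that requires a word of justification is that $z_1$ lies in a well-defined cell of $\calA$ at all, and this is immediate from the general position assumption made at the outset. I would also add one sentence of context, namely that by Lemma~\ref{lem:60} we have $|\calC_1|=O(m)$, so Lemma~\ref{lem:65} is precisely what licenses the algorithm to search only the $O(m)$ cells of $\calC_1$ — rather than the $\Theta(n)$ cells potentially touched by $M$, or the $O(m^2)$ cells used in \cite{ref:AgarwalNe12} — when locating $z_1$; but that is a remark about how the lemma is used rather than part of its proof.
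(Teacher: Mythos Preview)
Your proof is correct and follows exactly the paper's approach: the paper simply declares the lemma ``obvious'' from Lemma~\ref{lem:40}, and you have spelled out the two implicit steps (that $z_1\in M$, and that the cell of $\calA$ containing $z_1$ therefore lies in $\calC_1$ by definition). One small slip in your closing remark: the cells ``touched by $M$'' \emph{are} $\calC_1$, of size $O(m)$ by Lemma~\ref{lem:60}; what is $\Theta(n)$ in the worst case is the number of \emph{points} in $M$, not the number of cells they occupy.
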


\subsection{Computing the Set $\calC_1$}

Next, we show how to compute $\calC_1$.
A straightforward way is to first
compute $\calA$ and then traverse $\calA$ by following the skyline
$\pi_1$. But this approach is not efficient due to: (1)
computing $\calA$ takes $\Theta(m^2)$ time; (2)  the size of $\pi_1$ may be
$\Theta(n)$ due to $|M|=\Theta(n)$ in the worst case.
Below in Lemma \ref{lem:70},
we propose to compute~$\calC_1$ in $O(m\log n+m\log m)$ time.

First of all, we sort all points in $Q$ by their $x$-coordinates and
$y$-coordinates, respectively; accordingly, we obtain a sorted list for the
horizontal lines of $\calA$ and a sorted list for the vertical lines
of $\calA$. With these two sorted lists, given any point $p$, we can
determine the cell of $\calA$ that contains $p$ in $O(\log m)$ time by
doing binary search on the two sorted lists. We should point out that
there might be other ways to compute $\calC_1$, but the algorithm we propose
for Lemma \ref{lem:70} is particularly useful later when we compute other points in $S_k(P^1)$ than $z_1$.

\begin{lemma}\label{lem:70}
$P$ can be preprocessed in $O(n\log n)$ time using $O(n)$ space, such that given any $Q$,  we can compute
the set $\calC_1$ in $O(m\log n+m\log m)$ time.
\end{lemma}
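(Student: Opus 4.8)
The plan is to enumerate $\calC_1$ by walking along the skyline $\pi_1$ from its left endpoint rightwards, cell by cell, passing from each cell of $\calC_1$ to the next in $O(\log n+\log m)$ time; since $|\calC_1|=O(m)$ by Lemma~\ref{lem:60}, this gives the claimed $O(m\log n+m\log m)$ query time.

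\textbf{Preprocessing.} On $P$ (independently of $Q$) I would sort the points by $x$- and by $y$-coordinate and build two \emph{three-sided range-minimum} data structures: one returns, for a query region of the form $[x_0,\infty)\times[y_0,\eta)$, the point of $P$ in it with smallest $x$-coordinate, and the other returns, for a region of the form $[x_0,\xi]\times[y_0,\infty)$, the point in it with smallest $y$-coordinate. Such structures (e.g.\ via a range-successor structure) occupy $O(n)$ space, are built in $O(n\log n)$ time, and answer queries in $O(\log n)$ time. During a query, once $q^*$ is available we build two sorted lists of the $2m$ lines of $\calA$ in $O(m\log m)$ time, so that the cell of $\calA$ containing a given point is found in $O(\log m)$ time; note that $x=x(q^*)$ and $y=y(q^*)$ are themselves lines of $\calA$, so no cell straddles either of them and every cell we shall visit lies in the first quadrant.

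\textbf{The walk.} For $\eta$ let $g(\eta)$ denote the point of $P^1$ with smallest $x$-coordinate among those with $y<\eta$, and for $\xi$ let $h(\xi)$ denote the smallest $y$-coordinate of a point of $P^1$ with $x\le\xi$; the first structure evaluates $g$ (taking $x_0=x(q^*)$, $y_0=y(q^*)$) and the second evaluates $h$, each in $O(\log n)$ time. We start at $p_1:=g(+\infty)$, the point of $P^1$ of smallest $x$-coordinate (if $P^1=\emptyset$ we are done); it is a minimal point, hence the left endpoint of $\pi_1$. We locate and output its cell $C_1$. Inductively, having just output a cell $C=[x_l,x_r]\times[y_b,y_t]$ of $\calC_1$, we compute two candidates for the next vertex of $\pi_1$ lying outside $C$: a \emph{downward} candidate $b_\downarrow:=g(y_b)$ (intended to be the first vertex of $\pi_1$ with $y<y_b$) and a \emph{rightward} candidate $b_\rightarrow:=g\bigl(h(x_r)\bigr)$ (intended to be the first vertex of $\pi_1$ with $x>x_r$: first find $h(x_r)$, the $y$-coordinate of the last vertex of $\pi_1$ with $x\le x_r$, then the vertex immediately following it). We let $z'$ be whichever of $b_\downarrow,b_\rightarrow$ has the smaller $x$-coordinate (a nonexistent candidate counting as $+\infty$); if neither exists, $\pi_1$ has ended (at the point of $P^1$ of smallest $y$-coordinate, which lies in $C$) and we stop. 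Otherwise $z'$ is a minimal point in a new cell $C'$ strictly after $C$; we locate $C'$, output it together with $z'$, and recurse with $C'$ in the role of $C$. Each step makes $O(1)$ structure queries and one point location, costing $O(\log n+\log m)$; there are $O(m)$ steps, and with the $O(m\log m)$ spent on the lines of $\calA$ this gives the stated query time, in $O(n)$ space.

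\textbf{Where the work is.} Time and space are immediate; the content is showing that the walk visits exactly $\calC_1$, i.e.\ that $z'$ is always the vertex of $\pi_1$ that immediately follows the last vertex of $\pi_1$ inside $C$. The ingredients are: (a) in $x$-order the vertices of $\pi_1$ have strictly decreasing $y$-coordinates (general position and the definition of domination), so $\pi_1$ can leave $C$ only by dropping below $y_b$ or by moving past $x_r$, which are precisely the two candidates; (b) for any $\eta$, the point of $P^1$ of smallest $x$-coordinate with $y<\eta$ is automatically a minimal point of $P^1$ and is the first vertex of $\pi_1$ with $y<\eta$ --- here one uses that $x=x(q^*)$ is a line of $\calA$ (so $x_l\ge x(q^*)$) and that a dominator lying to the left of $C$ would already dominate any vertex of $\pi_1$ inside $C$, contradicting minimality; (c) symmetrically $h(\xi)$ is the $y$-coordinate of the last vertex of $\pi_1$ with $x\le\xi$, so $g(h(x_r))$ is indeed the first vertex with $x>x_r$; and (d) a short case check that in both the ``bottom-exit'' and the ``right-exit'' configurations the successor of $C$'s last vertex is the one of $b_\downarrow,b_\rightarrow$ of smaller $x$-coordinate. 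Item (b), and more generally keeping the global notion of minimal point consistent with the local cell boundaries throughout (b)--(d), is the main obstacle; once it is settled the correctness and the time bound follow as above, and this routine --- producing $\calC_1$ together with one vertex of $\pi_1$ per cell --- is exactly what is reused to compute $z_2,\dots,z_k$.
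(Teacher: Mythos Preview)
Your proof is correct and rests on the same primitive as the paper's---segment-dragging (equivalently, three-sided range-minimum) queries that let you pass from one cell of $\calC_1$ to the next in $O(\log n+\log m)$ time---but the traversal strategy differs. The paper walks the \emph{columns} of $\calA$ left to right: in each column it first pins down the topmost and bottommost cells of $\calC_1$ (one rightward drag, one upward drag) and then fills in the cells between them bottom-up, recording for every cell both its \emph{skyline-left} and \emph{skyline-bottom} point together with a typed ``generating segment'' (four types $s_0$--$s_3$). Your walk instead follows $\pi_1$ directly cell by cell via the two candidates $b_\downarrow,b_\rightarrow$ and stores only the entry vertex (the skyline-left point) of each cell. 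For Lemma~\ref{lem:70} in isolation your route is cleaner; the correctness checks (b)--(d) you outline go through exactly as you indicate. The paper's extra bookkeeping is not incidental, however: the four generating-segment types drive the case analysis of Lemma~\ref{lem:100}, which updates $\calC_{i-1}$ to $\calC_i$ in time proportional to the difference. Your final sentence that this routine ``is exactly what is reused'' for $z_2,\dots,z_k$ is true in spirit, but to make the incremental update work you would need to retain, for each cell, the analogue of the skyline-bottom point (e.g.\ the point realizing $h(x_r)$) and the query that produced it; otherwise after deleting $z_i$ you cannot restart the walk locally.
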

\begin{proof}
One operation frequently used for computing $\calC_1$
is the following {\em segment-dragging queries}.
Given any horizontal or vertical line segment $s$, we move
$s$ along a given direction perpendicular to $s$; the query
asks for the first point of $P$ hit by $s$ or reports no such point exists.
Chazelle \cite{ref:ChazelleAn88} constructed an $O(n)$-size data
structure in $O(n\log n)$ time such that each segment-dragging query
can be answered in $O(\log n)$ time.
As preprocessing, we build such a data structure on $P$.
Hence, the preprocessing takes $O(n\log n)$ time and $O(n)$ space.

For each cell $C$ of $\calC_1$, we call the leftmost point of $M\cap
C$ the {\em skyline-left point} of $C$ and call the bottommost point
of $M\cap C$  the {\em skyline-bottom point} of $C$.
In other words, if we move along the skyline $\pi_1$ from its left
endpoint to its right endpoint, then the skyline-left point of $C$ is
the first vertex of $\pi_1$ we meet in $C$ and the skyline-bottom point of $C$
is the last vertex of $\pi_1$ we meet in $C$. Note that if $C$ has only one
minimal point of $M$, then the only minimum point is both the skyline-left
point and the skyline-bottom point of $C$.

We will find the skyline-left point and the skyline-bottom point for each cell $C\in
\calC_1$. Each such point $p$ is determined by a segment-dragging
query on a segment $s$ and we call $s$ the {\em generating segment} of
$p$; $s$ will be associated with $p$ for later use (for computing
other points in $S_k(P^1)$ than $z_1$). Further, we will classify
these generating segments into four types, and again, they will be
useful later in Lemma \ref{lem:100} for computing $S_k(P^1)$.

All the vertical lines passing through points in $Q$ partition the space into $O(m)$ regions, which we refer to as {\em
columns} (including bounding lines).
Let $\calD_M$ denote the set of columns of $\calA$ each of which contains at
least one cell of $\calC_1$.
We search the columns of $\calD_M$ from left to right.
For each column $D\in \calD_M$, we will first find the topmost cell and the bottommost cell of $\calC_1$ in
$D$; then, from the bottommost cell to the topmost cell, we search all other
cells of $\calC_1$ in $D$ in a bottom-up fashion.
After the searching on $D$ is done, we proceed to the next column of $\calD_M$.
The details are given below.

Note that due to the general position assumption that no two points
in $P\cup Q$ have the same $x$- or $y$-coordinate, each point of $P$
lies strictly inside a cell of $\calA$.

We first determine the leftmost column of $\calD_M$, denoted by $D$, which
is the one containing the leftmost point $p_0$ of $M$ (see Fig.~\ref{fig:searchcells}).
$p_0$ can be found by the following segment-dragging query.
Let $y_{max} = \max_{p \in P^1} y(p)$.
Consider a vertical segment $s_0=\overline{q^*b}$ where
$b = (x(q^*), y_{max})$. If we drag $s_0$ rightwards (i.e., horizontally
to the right), $p_0$ will be the first point of $P^1$ hit by
$s_0$. By using the segment-dragging query data structure
on $P$, $p_0$ can be found in $O(\log n)$ time. After having $p_0$, $D$
can be determined in $O(\log m)$ time using binary search on the sorted list of the vertical lines of $\calA$.

\begin{figure}[t]
\begin{minipage}[t]{\linewidth}
\begin{center}
\includegraphics[totalheight=2.0in]{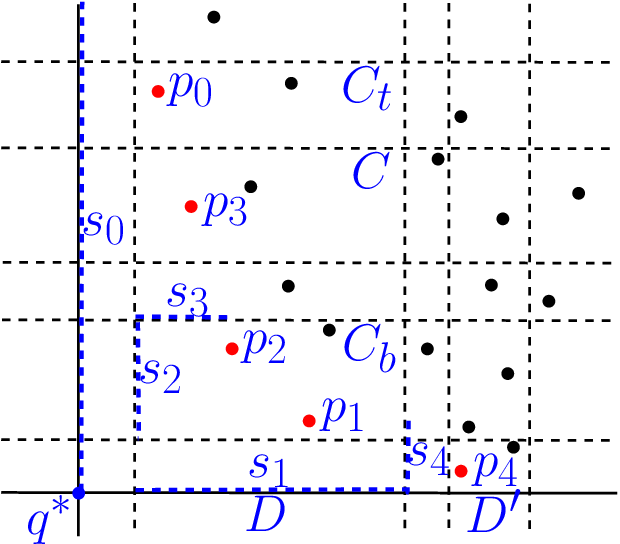}
\caption{\footnotesize Illustrating the algorithm in Lemma
\ref{lem:70}: the dashed grid is $\calA$.
}
\label{fig:searchcells}
\end{center}
\end{minipage}
\vspace*{-0.15in}
\end{figure}

Notice that the cell of $\calA$ that contains $p_0$ is the
topmost cell in $D\cap\calC_1$, which we denote by $C_t$, and that $p_0$ is the skyline-left point
of $C_t$ (see Fig.~\ref{fig:searchcells}). The segment $s_0$ is
the generating segment of $p_0$ and we classify $s_0$ as an {\em
$s_0$-type} generating segment. In general, the $s_0$-type generating segments are used to
find the skyline-left points of the topmost cells of the columns of
$\calD_M$.

Next, we determine the bottommost cell of $D\cap\calC_1$, denoted by $C_b$. We first determine
the skyline-bottom point $p_1$ of $C_b$ by a segment-dragging query as
follows. Let $\ell$ denote the horizontal line $y = y(q^*)$. Set $s_1 = D \cap \ell$.
If we drag $s_1$ upwards, $p_1$ will be the first point of $P^1$ hit
by $s_1$ (see Fig.~\ref{fig:searchcells}). After $p_1$ is found in $O(\log n)$ time, $C_b$ can be determined in additional $O(\log m)$ time. $s_1$ is the generating segment of $p_1$ and we classify
$s_1$ as the {\em $s_1$-type} generating segment. In general, $s_1$-type
generating segments are used to
find the skyline-bottom points of the bottommost cells of the columns of
$\calD_M$.

If $C_b = C_t$, then the column $D$ contains only one cell of $\calC_1$, and our searching on $D$ is done.
Below, we assume $C_b \neq C_t$.

\def\SecCell{C_s}
In the sequel, from the bottommost cell $C_b$,
we search the cells of $\calC_1$ in $D$ in a bottom-up manner until we
meet the topmost cell $C_t$.
We first show how to determine the second lowest cell of $\calC_1\cap
D$ (i.e., the one of $\calC_1\cap D$ right above $C_b$), denoted by $\SecCell$.

To determine $\SecCell$, we first find the skyline-left point $p_2$ of $C_b$ using a segment-dragging query, as
follows. Let $s_2$ be the left side of $C_b$. $p_2$ is the first point
in $P^1$ hit by dragging $s_2$ rightwards (see
Fig.~\ref{fig:searchcells}). $s_2$ is the generating segment of
$p_2$ and we classify $s_2$ as the {\em $s_2$-type} generating
segment. In general, each $s_2$-type generating segment is used to find
the skyline-left point of a cell whose skyline-bottom point has just
been found. Next, we determine $\SecCell$ by using $p_2$.

We first determine the skyline-bottom point $p_3$ of $\SecCell$.
Since $y(p_3) > y(p_2)$, $x(p_3)<x(p_2)$ (otherwise $p_2$ would dominate $p_3$).
An easy observation is that $p_3$ is the lowest point among all points
of $P^1\cap D$ whose $x$-coordinates are less than $x(p_2)$ (see
Fig.~\ref{fig:searchcells}).
We can determine $p_3$ by the following segment-dragging query.
%
Let $s_3$ be the horizontal line segment on the top side of the cell
$C_b$ such that the left endpoint of $s_3$ is the upper left vertex of
$C_b$ and the right endpoint has $x$-coordinate $x(p_2)$ (see
Fig.~\ref{fig:searchcells}).
Due to our general position assumption that no two points in $P\cup Q$
have the same $x$- or $y$-coordinate, $p_3$ is the point of $P^{1}$ hit first by
dragging $s_3$ upwards. After $p_3$ is found, $\SecCell$ can be
determined.
Therefore, we can determine $\SecCell$ in $O(\log n+\log m)$ time.
$s_3$ is the generating segment of $p_3$ and we
classify $s_3$ as the {\em $s_3$-type} generating segments. In
general, $s_3$-type generating segments are used to find the
skyline-bottom points for non-bottommost cells of the columns of
$\calD_M$.

If $\SecCell=C_t$, we are done searching on $D$.
Otherwise, we continue the above procedure to search other
cells of $\calC_1\cap D$ until we meet the topmost cell $C_t$.

Now we proceed to the next column $D' \in \calD_M$, in the following way.
We first determine $D'$ by a segment-dragging query as follows.
Recall that $p_1$ is the lowest point in $P^1\cap D$.
Let $s_4$ be the vertical line segment on the right bounding line of
$D$ such that the lower endpoint of $s_4$ has $y$-coordinate $y(q^*)$
and the upper endpoint has $y$-coordinate $y(p_1)$ (see Fig.~\ref{fig:searchcells}).
We drag the segment $s_4$ rightwards, and let $p_4$ be the first point of
$P^1$ hit by $s_4$ (see Fig.~\ref{fig:searchcells}).
It is not difficult to see that $p_4$ is
a minimal point and the
column of $\calA$ containing $p_4$ is $D'$. Further, $p_4$ is the
skyline-left point of the topmost
cell of $\calC_1\cap D'$.
Hence, after $p_4$ is found, $D'$ and the topmost cell of $\calC_1\cap
D'$ can be determined in $O(\log m)$ time. $s_4$ is the
generating segment of $p_4$; note that $s_4$ is an $s_0$-type
generating segment.

Note that if the above segment-dragging query on $s_4$ fails to
find any point (i.e., such a point $p_4$ does not exist), then all
cells of $\calC_1$ have been found, and we terminate.
Otherwise, we proceed to search all cells in $\calC_1\cap D'$ in the same
way as in the column $D$, and then search other columns of $\calD_M$
similarly.

For the running time, as shown above, the
algorithm spends $O(\log n+\log m)$ time finding each cell of
$\calC_1$.
Due to $|\calC_1|=O(m)$ (by Lemma \ref{lem:60}),
computing $\calC_1$ takes $O(m\log n+m\log m)$ time. The lemma thus follows.
\end{proof}

\subsection{Computing the Top-$k$ $\ANN$ Set $S_k(P^1)$}

In this section, we compute $S_k(P^1)$ in the order of
$z_1,z_2,\ldots,z_k$.

Since $z_1$ is in one of the cells of $\calC_1$, once we have
$\calC_1$, we compute the $\ANN$ of $Q$ in $C\cap P$ in each cell $C\in
\calC_1$; among the $|\calC_1|$ candidate points,
$z_1$ is the one with the smallest aggregate distance to
$Q$. Once $z_1$ is obtained, we use a similar approach to compute $z_2$.
Let $\pi_2$ be the skyline of $P^1\setminus\{z_1\}$, and let $\calC_2$
be the set of cells of $\calA$ that contain the vertices of $\pi_2$.
Again, $z_2$ must be in one of the cells of
$\calC_2$, and we find $z_2$ by searching the cells of $\calC_2$. In
general, let $\pi_{i}$ be the skyline of
$P^1\setminus\{z_1,\ldots,z_{i-1}\}$, and let $\calC_{i}$
be the set of cells of $\calA$ that contain the vertices
of $\pi_{i}$. The point $z_{i}$ must be in one of the cells of
$\calC_{i}$, and we find $z_{i}$ by searching the cells of
$\calC_{i}$. We repeat this till $z_k$ is found.

For each $1\leq i\leq k$, since $\pi_i$ is a skyline,
$|\calC_i|=O(m)$. A straightforward
implementation to compute $S_k(P^1)$ will need to search $O(km)$ cells. We will show that we
only need to search $O(k+m)$ cells in total, and more importantly, we
can find all these cells efficiently. Specifically, we propose an
algorithm that can efficiently determine the set $\calC_i$
by updating the set $\calC_{i-1}$, for all $2\leq i\leq k$.

In the sequel, we first present an algorithm that can quickly compute
the $\ANN$ of $Q$ in $C\cap P$ for any cell $C$ of $\calA$.
An $O(n\log^2 n)$-size data structure was given
in \cite{ref:AgarwalNe12} that can be built in $O(n\log^2 n)$ time and
can compute the $\ANN$ in any cell $C$ of $\calA$ in $O(\log^3 n)$ time.
By using compact interval trees \cite{ref:GuibasCo91},
we have the following improved result in Lemma \ref{lem:80}.

\begin{lemma}\label{lem:80}
For a set $P$ of $n$ points in the plane, an $O(n\log n\log\log n)$-size data structure can be built in
$O(n\log n\log\log n)$ time, such that given any axis-parallel rectangle $C$ (e.g., any cell of $\calA$), the $\ANN$ of $Q$ in $P\cap C$ can be computed in $O(\log^2 n)$ time. With trade-off between preprocessing and query time, we can build two other data structures: the first one has $O(n\log n)$ preprocessing time and space with $O(\log^2 n\log\log n)$ query time; the second one has $O(n\log n\log^* n)$ preprocessing time and space with $O(\log^2 n\log^* n)$ query time.
\end{lemma}
\begin{proof}
Our data structure uses the compact interval tree
\cite{ref:GuibasCo91}, which was for solving the following {\em
sub-path hull queries} in \cite{ref:GuibasCo91}.
Let $\pi$ be a simple path of $n$ vertices in the plane and
suppose the vertices are $v_1,v_2,\ldots,v_n$ ordered along $\pi$.
Given two vertex indices $i$ and $j$ with $i < j$,
the {\em sub-path hull query} asks for the convex hull of all vertices
$v_i,v_{i+1},\ldots,v_j$.
A compact interval tree data structure was given
in \cite{ref:GuibasCo91}, and for each
sub-path hull query, it can report in $O(\log n)$ time
a data structure that represents the convex hull
such that any standard binary-search based operation
on the convex hull can be implemented in $O(\log n)$ time (e.g.,
finding an extreme point on the convex hull along any given
direction). Assume the vertices of $\pi$ are sorted by their $x$- or $y$-coordinates.
The compact interval tree is of $O(n\log\log n)$ size and can
be built in the same time. With trade-off between preprocessing and query time, two other compact interval trees can be built for the sub-path hull queries: the first one has $O(n)$ preprocessing time and space with $O(\log n\log\log n)$ query time; the second one has $O(n\log^* n)$ preprocessing time and space with $O(\log n\log^* n)$ query time.

Our data structure for the lemma is constructed as follows.
At the high-level, it is similar to the two-dimensional orthogonal range tree
\cite{ref:deBergCo08}. A balanced binary search tree $T$ is built
based on the
$x$-coordinates of the points in $P$. The leaves of $T$ store the
points of $P$ in sorted order from left to right, and the internal
nodes store splitting values to guide the search on $T$. For each node
$v$ of $T$, it also stores the subset $P(v) \subseteq P$ of points in the subtree
of $T$ rooted at $v$, and $P(v)$ is called the {\em canonical subset}
of $v$. For each canonical subset $P(v)$, we build a compact interval
tree in the following way. If we sort the points of $P(v)$ by their
$y$-coordinates and connect each pair of adjacent points in the
sorted list by a line segment, we obtain a path $\pi(v)$.
The points in $P(v)$ are vertices of $\pi(v)$.
Note that $\pi(v)$ is a simple path and
each horizontal line intersects $\pi(v)$ at most once.
We build a compact interval tree data structure on $\pi(v)$ using
the approaches in \cite{ref:GuibasCo91}.
This finishes the construction of our data structure.

For each canonical subset $P(v)$, depending on which of the three compact interval trees is used, constructing the compact interval tree
on $\pi(v)$ takes $O(\mu\log\log \mu)$ (or $O(\mu)$ or $O(\mu\log^*\mu)$) time and space, where $\mu=|P(v)|$. Note that the $y$-sorted list of
$P(v)$ can be built during the construction of $T$ in a bottom-up manner.
Hence, the preprocessing time and space is $O(n\log n\log\log n)$ (or $O(n\log n)$ or $O(n\log n\log^* n)$), the same as claimed in the lemma statement.

Given any axis-parallel rectangle $C$, our goal is to find the
$\ANN$ of $Q$ in $C\cap P$. Essentially, we are looking for an
extreme point in $C\cap P$ along a certain direction, denoted by $\sigma$. As discussed
in \cite{ref:AgarwalNe12}, $\sigma$ is determined by
the two factors $C_a$ and $C_b$ defined in Lemma \ref{lem:50}, and can be computed in $O(\log m)$ time by Lemma \ref{lem:50}. Recall that we have assumed that $m<n$ holds; hence $\log m=O(\log n)$. 

Suppose $C =  [x_l,x_r] \times [y_{b}, y_{t}]$. Using the range $[x_l,x_r]$, we first find the $O(\log n)$ canonical
subsets whose union is the set of points in $P$ lying between the two
vertical lines $x=x_l$ and $x=x_r$.  For each such canonical subset
$P(v)$, we use the range $[y_b,y_t]$ to determine the sub-path
of $\pi(v)$ inside $C$, which can be done by binary search on the
$y$-sorted list of $P(v)$; subsequently, we use the compact interval
tree data structure on $\pi(v)$
to (implicitly) report the convex hull of the sub-path in $O(\log n)$ time (or $O(\log n\log\log n)$ or $O(\log n\log^* n)$ time), after
which we search the extreme point on the convex hull
along the direction $\sigma$ in $O(\log n)$ time.
In this way, we obtain $O(\log n)$ extreme points for these $O(\log n)$
canonical subsets, and the one minimizing the aggregate distance
to $Q$ is the $\ANN$ of $Q$ in $C\cap P$.
Assuming that we have computed the three factors
$C_a$, $C_b$, and $C_c$ as defined in Lemma \ref{lem:50}, for each extreme
point found above, its aggregate distance to $Q$ can be computed in
constant time.

Therefore, the $\ANN$ of $Q$ in $C\cap P$ can be found
in $O(\log^2 n)$ time (or $O(\log^2 n\log\log n)$ or $O(\log^2 n\log^* n)$ time).
The lemma thus follows.
\end{proof}

In the following, to avoid tedious discussions, unless otherwise stated, when we refer to Lemma \ref{lem:80} we will always use the data structure with $O(\log^2n)$ query time, with the understanding that using different data structures will give different performances (i.e., preprocessing and query time) accordingly.

Let $m_i=|\calC_i|$ for each $1\leq i\leq k$. By Lemma \ref{lem:80},
we can determine $z_1$ in $O(m_1\log^2 n)$ time.  Next we continue to
compute $z_2$. To this end, we need to find the set $\calC_2$ first.
Instead of computing $\calC_2$ from scratch as we did for $\calC_1$,
we obtain $\calC_2$ by updating $\calC_1$. Specifically, if some
cells are both in $\calC_1$ and $\calC_2$, we do not need to
compute them again. In other words, we only need to compute the cells
in $\calC_2\setminus \calC_1$. Let $C(z_1)$ denote the cell containing
$z_1$. In fact, we will show that all the cells of $\calC_1$ except
$C(z_1)$ must be in $\calC_2$. The cell $C(z_1)$ may or may not be in
$\calC_2$. If $C(z_1) \in \calC_2$, then special care needs to be taken
when searching $C(z_1)$ because we are looking for $z_2$ and
the point $z_1$ should not be considered any more.
The details are given below.

For each $2\leq i\leq k$, let $\calC_i'=\calC_i\setminus\calC_{i-1}$
and $m_i'=|\calC_i'|$.
We first show that $\calC_2$ can be obtained in $O(m'_2(\log^2 n+\log m))$ time, and specifically,
we compute the cells of $\calC_2'$ and determine whether $C(z_1)\in\calC_2$, which is done in Lemma \ref{lem:100}.

The algorithm in Lemma \ref{lem:100} needs a dynamic version of the segment-dragging
query data structure that can support point deletions and insertions for $P$.
Later after we finish the query, we also need to insert those points that have been deleted back to $P$, and we call them {\em special insertions}, i.e., whenever we insert a point $p$ to $P$ for the segment-dragging query data structure, $p$ has already been deleted from $P$ before.
In the following Lemma \ref{lem:dynamic}, we present such a data
structure by using the range trees \cite{ref:deBergCo08}.
Note that the performance of the data structure in Lemma
\ref{lem:dynamic} may not be the best: Since other parts of our
algorithm for computing $S_k(P^1)$ dominate the overall running time, we
choose to present a data structure that is simple and does not affect
the overall performance.

\begin{lemma}\label{lem:dynamic}
For a set $P$ of $n$ points in the plane, we can build a data
structure in $O(n\log n)$ time and $O(n\log n)$ space that can answer
each segment-dragging query in $O(\log^2 n)$ time and support each
point deletion and special insertion for $P$ in $O(\log^2 n)$ time.
\end{lemma}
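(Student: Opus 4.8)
The plan is to reduce a segment-dragging query to a three-sided range extreme query and then to implement the latter with a dynamic range tree built on top of the range trees of \cite{ref:deBergCo08}. First I would note that a segment-dragging query is exactly a three-sided query: if $s=\overline{ab}$ is a horizontal segment with $a=(x_1,y_0)$ and $b=(x_2,y_0)$ and we drag it upwards, the first point of $P$ hit by $s$ is the point $p\in P$ that minimizes $y(p)$ subject to $x_1\le x(p)\le x_2$ and $y(p)\ge y_0$; dragging a horizontal segment downwards, and dragging vertical segments leftwards or rightwards, are symmetric. Hence it suffices to maintain a constant number of structures, one per query type, each answering a three-sided range extreme query over $P$ and supporting insertions and deletions of points. (By the general position assumption that no two points of $P\cup Q$ share an $x$- or $y$-coordinate, it is immaterial whether the inequalities above are strict.)

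Second, for the ``drag a horizontal segment upwards'' type I would use a range tree keyed on $x$-coordinates: a balanced binary search tree $T'$ on $\{x(p)\mid p\in P\}$ in which every node $v$ stores its canonical subset $P(v)$ in a secondary balanced binary search tree $T'_v$ keyed on $y$-coordinates. A query with parameters $(x_1,x_2,y_0)$ first finds the $O(\log n)$ canonical nodes whose subsets partition $\{p\in P\mid x_1\le x(p)\le x_2\}$; at each such node $v$ it performs a successor search in $T'_v$ to obtain the point of $P(v)$ with the smallest $y$-coordinate that is at least $y_0$; the answer is the one of these $O(\log n)$ candidates with the smallest $y$-coordinate, so a query takes $O(\log^2 n)$ time. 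The structure uses $O(n\log n)$ space, and since the $y$-sorted order of every $P(v)$ can be produced bottom-up during the construction of $T'$, it is built in $O(n\log n)$ time.

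Third, to support updates: inserting or deleting a point $p$ changes $P(v)$ only for the $O(\log n)$ nodes $v$ on the root-to-leaf path of $p$ in $T'$, and at each such node we update $T'_v$ in $O(\log n)$ time, for $O(\log^2 n)$ time in total. To keep $T'$ balanced while $x$-values are inserted and deleted, I would let $T'$ be a weight-balanced ($\mathrm{BB}[\alpha]$) tree: whenever rebalancing must rebuild the subtree rooted at some node $u$ of size $s$, we rebuild that subtree together with all secondary trees inside it in $O(s\log s)$ time, and the standard amortized analysis of weight-balanced trees charges $O(\log^2 n)$ amortized time per update to these rebuilds. Thus each insertion and deletion costs $O(\log^2 n)$ amortized time, which is enough for our purposes since the algorithm for $S_k(P^1)$ performs only $O(k)$ updates and its running time is measured as an overall bound. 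Keeping the four symmetric structures in parallel does not affect the asymptotics, so the lemma follows.

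The step I expect to be the main obstacle is dynamizing the primary tree: after a single rotation in a plain balanced binary search tree the secondary structures of the affected nodes could require $\Omega(n)$ total work to repair, so the weight-balanced backbone together with the rebuild-and-amortize argument is the essential device; the remaining ingredients are a routine combination of range trees with successor searches in the secondary trees.
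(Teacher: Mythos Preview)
Your proposal is correct and follows essentially the same design as the paper: reduce each segment-dragging query to a three-sided extreme-point query, answer it with a two-level range tree whose primary tree is keyed on $x$-coordinates and whose secondary trees are keyed on $y$-coordinates, and perform a successor search in each of the $O(\log n)$ secondary structures.

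The one place where you diverge is the handling of updates to the primary tree. You dynamize $T'$ with a weight-balanced backbone and amortized subtree rebuilding, obtaining $O(\log^2 n)$ \emph{amortized} time per insertion and deletion; you then argue this suffices because only $O(k)$ updates occur. The paper takes a simpler shortcut: it observes that in this application the only points ever inserted are points that were previously deleted, so the primary tree can be left static---no rotations or rebuilds are ever needed, the leaf for each point is simply marked absent or present, and the height stays $O(\log n)$ throughout. This gives worst-case $O(\log^2 n)$ updates with no amortization argument at all. Your approach is more general (it would support arbitrary insertions), while the paper's is simpler and tighter for the actual usage pattern; both suffice for the lemma as used.
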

\begin{proof}
Our data structure consists of two range trees, one for horizontal
segment-dragging queries and the other for vertical segment-dragging
queries. Below, we only present the one for horizontal
segment-dragging queries and the other one can be obtained similarly.

We first sort the
points in $P$ by their $x$-coordinates and $y$-coordinates,
respectively. We build a balanced binary search tree $T$ based on the
$x$-coordinates of the points in $P$. The leaves of $T$ store the
points of $P$ in sorted order from left to right. Each node $v$ of
$T$ also stores the subset $P(v)$ of points stored in the leaves
of the subtree rooted at $v$; $P(v)$ is called the {\em canonical
subset} of $v$. For each node $v$, we use another balanced binary
search tree $T(v)$ to store the points in $P(v)$ based on the
$y$-coordinates of the points. It is commonly known that $T$
can be constructed in $O(n\log n)$ time using $O(n\log n)$ space
\cite{ref:deBergCo08}.

Consider any segment-dragging query. Without loss of generality,
assume we drag upwards a horizontal segment $s = [x_1(s),x_2(s)] \times \{y(s)\}$
(i.e., its $y$-coordinate is $y(s)$ and
its $x$-coordinate spans the interval $[x_1(s),x_2(s)]$).
We first determine the $O(\log n)$
canonical subsets of $T$ whose union is the subset of points of $P$
with $x$-coordinates lying in $[x_1(s),x_2(s)]$.
For each canonical subset $P(v)$, we use the tree $T(v)$ to determine
in $O(\log n)$ time
the lowest point of $P(v)$ whose $y$-coordinate is no less than $y(s)$
and that point will be the first point hit by dragging $s$
upwards. After we find such a
point in each canonical subset, we report the point
with smallest $y$-coordinate as
the answer to the segment-dragging query for $s$.
The total query time is $O(\log^2 n)$ time.


Now consider deleting a point $p$ from $P$. We first find the leaf $v_p$ of $T$ storing $p$. Then, for each node $v$ in the path of $T$ from the root to $v_p$, we delete $p$ from the tree $T(v)$, which can be done in $O(\log n)$ time.
Hence, it takes $O(\log^2 n)$  time for each point deletion.
Consider a special insertion that inserts a point $p$ to $P$. Since it is a special insertion, $p$ was in $P$ before but has been deleted. We first find the leaf $v_p$ of $T$ that stored $p$ before. Then, for each node $v$ in the path of $T$ from the root to $v_p$, we insert $p$ to the tree $T(v)$, which can be done in $O(\log n)$ time.
Hence, it takes $O(\log^2 n)$  time for each special insertion.

The lemma thus follows.
\end{proof}

Next in Lemma \ref{lem:100}, we compute $\calC_2$ based on $\calC_1$, by using the data structure in
Lemma \ref{lem:dynamic}.
The algorithm for Lemma \ref{lem:100} essentially follows the behavior
of the algorithm for Lemma \ref{lem:70}, but only focuses on searching
the cells of $\calC_2'$. The efficiency of the algorithm for Lemma
\ref{lem:100} also hinges on the observation that the cells of
$\calC_2'$ form at most two subsets (separated by $C(z_1)$ if
$C(z_1) \in \calC_2$) of consecutive cells of $\calC_2$ if we
order the cells of $\calC_2$ from ``northwest'' to ``southeast''.

\begin{lemma}\label{lem:100}
We can determine the set $\calC_2$ in $O((1+m'_2)(\log^2 n+\log m))$ time,
where $m'_2=|\calC'_2|$, and more specifically,
our algorithm will compute the cells of $\calC_2'$ and determine
whether $C(z_1)\in \calC_2$.
\end{lemma}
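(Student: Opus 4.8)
The plan is to mimic the computation of $\calC_1$ from Lemma~\ref{lem:70}, but restricted to the portion of the skyline that actually changes when $z_1$ is removed from $P^1$. The starting point is the observation that $\pi_2$, the skyline of $P^1\setminus\{z_1\}$, differs from $\pi_1$ only ``locally'': outside the vertical strip bounded by the neighbors of $z_1$ on $\pi_1$, the two skylines coincide, so every cell of $\calC_1$ other than possibly $C(z_1)$ survives in $\calC_2$, and the genuinely new cells of $\calC_2'$ are exactly those cells that come to contain the new skyline vertices that ``fill the gap'' left by deleting $z_1$. Concretely, I would first delete $z_1$ from the dynamic segment-dragging structure of Lemma~\ref{lem:dynamic} in $O(\log^2 n)$ time. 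Then, using the generating segments that were stored alongside the skyline-left and skyline-bottom points of $C(z_1)$ (of types $s_0,s_1,s_2,s_3$, as recorded in the proof of Lemma~\ref{lem:70}), I would re-issue the corresponding segment-dragging queries against the updated point set: since $z_1$ is gone, each such query now returns a point strictly ``behind'' $z_1$, i.e.\ a new minimal point of $P^1\setminus\{z_1\}$. Starting from these, I would walk outward exactly as in Lemma~\ref{lem:70} — column by column within the affected strip, and bottom-up within each affected column — generating new cells of $\calC_2'$ one at a time, each at cost $O(\log n+\log m)$ for the drag plus $O(\log m)$ to locate the cell in $\calA$, until the walk rejoins a cell already in $\calC_1$ (detected by a table/dictionary on the cells of $\calC_1$). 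Whether $C(z_1)\in\calC_2$ is decided by checking whether $C(z_1)$ still contains a minimal point of $P^1\setminus\{z_1\}$, again via one segment-dragging query confined to $C(z_1)$.

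The key structural claim I must establish is that the new skyline vertices form at most two contiguous runs of cells of $\calC_2$, separated by $C(z_1)$ when $C(z_1)\in\calC_2$ (the ``northwest-to-southeast'' ordering mentioned in the statement). This follows because $\pi_2$ is again a monotone staircase: deleting the single point $z_1$ can only expose points that were previously dominated solely by $z_1$, and these all lie in the rectangle northeast-closed region between the two $\pi_1$-neighbors of $z_1$; the part of $\pi_2$ inside this region is itself an $x$-monotone, $y$-monotone path, so the cells it passes through are consecutive in the global skyline order, with $C(z_1)$ sitting between the ``upper-left'' run and the ``lower-right'' run. Once this is in hand, the correctness of the walk is immediate: the walk never revisits an old cell except at its two endpoints where it terminates, so it touches exactly $m_2'$ new cells plus $O(1)$ boundary cells. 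Charging $O(\log^2 n+\log m)$ to each of the $m_2'$ new cells, plus the one-time $O(\log^2 n)$ deletion and the $O(\log^2 n+\log m)$ test on $C(z_1)$, gives the claimed $O((1+m_2')(\log^2 n+\log m))$ bound.

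The main obstacle I anticipate is the careful bookkeeping at the junction cell $C(z_1)$: one must argue precisely which of the four types of generating segments need to be re-fired after the deletion, handle the sub-case where $z_1$ was the unique minimal point in its cell (so $C(z_1)\notin\calC_2$ and the two runs of $\calC_2'$ must be ``stitched'' directly to each other across the now-empty column slot), and make sure the walk's termination condition — hitting a cell of $\calC_1\setminus\{C(z_1)\}$ — is triggered on both ends. This is where the explicit classification of generating segments from Lemma~\ref{lem:70} pays off, since it tells us exactly which drag to repeat to continue the skyline past each freshly discovered minimal point; the rest is a routine repetition of the Lemma~\ref{lem:70} traversal with the dynamic structure of Lemma~\ref{lem:dynamic} substituted for the static one.
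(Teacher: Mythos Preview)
Your proposal is correct and takes essentially the same approach as the paper: delete $z_1$ from the dynamic structure of Lemma~\ref{lem:dynamic}, re-fire the stored generating segment(s) of $z_1$, and replay the Lemma~\ref{lem:70} traversal on the local portion of the new skyline, terminating upon reaching the neighboring cells $C_{i-1},C_{i+1}\in\calC_1$. The paper spells out in full the case analysis you flag as the ``main obstacle'' (four cases according to whether $z_1$ is the skyline-left and/or skyline-bottom point of $C(z_1)$, with sub-cases on the $s_0/s_1/s_2/s_3$ type of its generating segment), and note that each drag costs $O(\log^2 n)$, not $O(\log n)$, with the dynamic structure---a slip you correct in your final tally.
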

\begin{proof}
We call the order of the cells of $\calC_1$ by which the skyline
$\pi_1$ crosses them from left to right the {\em canonical order} of
$\calC_1$. In other words, the canonical order of $\calC_1$ follows
the northwest-to-southeast order.
We define the canonical order of $\calC_2$ similarly.

Suppose the canonical order
of the cells of $\calC_1$ is: $C_1,C_2,\cdots,C_{m_1}$.
Note that we can obtain
this ordered list during computing $\calC_1$ in Lemma \ref{lem:70} within
the same running time. Recall that when computing $\calC_1$ we also
computed a skyline-left point and a skyline-bottom point for each cell
of $\calC_1$ as well as their generating segments.
Let $C_i=C(z_1)$, i.e., the cell that contains $z_1$.
We assume $i\neq 1$ and $i\neq m_1$ (otherwise the algorithm is
similar and much simpler).

In order to better understand the algorithm we will present below,
we first discuss a question:
which cells are possibly in $\calC'_2$? Imagine that we partition the
plane into four quadrants with respect to $z_1$ by the vertical line
through $z_1$ and the horizontal line through $z_1$; an easy
observation is that only the cells intersecting the first quadrant
can possibly be in $\calC'_2$ because only points in the first
quadrant are dominated by $z_1$. Further, for each cell $C_j$ with $j\neq
i$, none of the vertices of the skyline $\pi_1$ in $C_j$ is
dominated by $z_1$,
and thus $C_j$ is still in $\calC_2$.
In other words, all cells of
$\calC_1\setminus\{C_i\}$ are still in $\calC_2$. The cell $C_i$ may or
may not be in $\calC_2$. Also note that if we remove $z_1$ from $P$,
then the skyline-bottom point of
$C_{i-1}$ may be changed (see
Fig.~\ref{fig:newskyline}), but the
skyline-left point of $C_{i-1}$ does not change; for each cell $C_j$
with $1\leq j\leq i-2$, neither its skyline-left point nor its
skyline-bottom point changes. Similarly, due to the removal of $z_1$,
the skyline-left point of $C_{i+1}$ may be changed, but its
skyline-bottom point does not change; for each cell $C_j$
with $i+2\leq j\leq m_1$, neither its skyline-left point nor its
skyline-bottom point changes.

\begin{figure}[t]
\begin{minipage}[t]{\linewidth}
\begin{center}
\includegraphics[totalheight=2.0in]{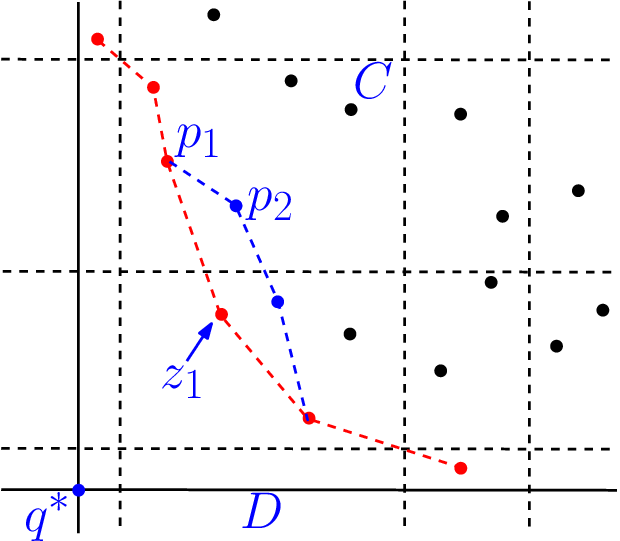}
\caption{\footnotesize The red points are in $\pi_1$ and the blue
points are in $\pi_2\setminus \pi_1$. The skyline-bottom point of the
cell $C$ is $p_1$ in $\pi_1$ but $p_2$ in $\pi_2$.
}
\label{fig:newskyline}
\end{center}
\end{minipage}
\vspace*{-0.15in}
\end{figure}

The above implies that to determine $\calC_2$, we need to do the
following. (1) Find all cells in $\calC'_2$, and as in Lemma
\ref{lem:70}, for each cell of $\calC'_2$, compute its skyline-left
point and skyline-bottom point as well as their generating segments.
(2) Determine whether $C_i$ is still in $\calC_2$, and if yes,
compute its new skyline-left
point and skyline-bottom point as well as their generating segments,
if any of them changes. (3) Compute the new skyline-bottom point (and
its generating segment) for $C_{i-1}$ if it changes.  (4)
Compute the new skyline-left point (and
its generating segment) for $C_{i+1}$ if it changes.

Let $\calD_P$ be the dynamic segment-dragging data structure in
Lemma \ref{lem:dynamic} we built on $P$.
Below, we give an algorithm that can determine $\calC_2$ in
$O((1+m'_2)(\log^2 n+\log m))$ time, and in particular, we need to find the
cells of $\calC'_2$. Intuitively, if $C_i\not\in \calC_2$, then
$\calC'_2$ consists of all cells
of $\calC_2$ between $C_{i-1}$ and $C_{i+1}$ in the
canonical order; otherwise, $\calC'_2$ consists of all cells
of $\calC_2$ between $C_{i-1}$ and $C_i$ and all cells between $C_i$
and $C_{i+1}$.
Our algorithm essentially follows the behavior of the algorithm in
Lemma \ref{lem:70}, but only focuses on the cells in $\calC'_2\cup
\{C_{i-1},C_i,C_{i+1}\}$. Recall that $C_i$ is the cell of $\calC_1$
that contains $z_1$.

First of all, we delete the point $z_1$ from the data structure $\calD_P$.
The point $z_1$ can be the skyline-left point of $C_i$,
or the skyline-bottom point of $C_i$, or both of them, or
neither of them.  Our algorithm
works differently for these cases, as follows. Recall that according
to our algorithm in Lemma \ref{lem:70}, if $z_1$ is either the
skyline-left point or the skyline-bottom point of $C_i$, then $z_1$
has a generating segment, denoted by $s(z_1)$. In other words, $z_1$ is
identified by a segment-dragging query on $s(z_1)$ in our algorithm in
Lemma \ref{lem:70}.

\begin{enumerate}

\item

If $z_1$ is neither the skyline-left point nor the skyline-bottom
point of $C_i$, then $C_i$ is still in $\calC_2$ and $\calC'_2=\emptyset$. In
fact, $\calC_2=\calC_1$. Further, the skyline-left and skyline-bottom
points of any cell of $\calC_1$ do not change. Hence, we are done for
this case.

\item

If $z_1$ is the skyline-left point but not the skyline-bottom point,
then according to our algorithm in Lemma \ref{lem:70}, the generating
segment $s(z_1)$ is either an
$s_2$-type or an $s_0$-type. Note that since $z_1$ is not the
skyline-bottom point of $C_i$, the skyline-bottom point of $C_i$ is
still in the skyline $\pi_2$, which implies that
$C_i$ is still in $\calC_2$ and no cell of $\calC'_2$ is between $C_i$ and
$C_{i+1}$ in the canonical order of $\calC_2$. In other words,
all cells of $\calC'_2$ are between $C_{i-1}$ and $C_i$ in the canonical
order of $\calC_2$. Denote by $D$ the column of $\calA$ that contains
$C_i$.

\begin{enumerate}
\item
If $s(z_1)$ is an $s_2$-type, then $C_i$ is not the topmost cell of
$\calC_1$ in the column $D$, which implies that
$C_{i-1}$ is in $D$.  According to the algorithm in Lemma
\ref{lem:70}, $s(z_1)$ is the left side of $C_i$ (i.e., $z_1$ is the
first point of $P$ hit by dragging $s(z_1)$ rightwards).
By using the data
structure $\calD_P$ (after deleting $z_1$), we do a segment-dragging
query by dragging $s(z_1)$ rightwards to find the first point of
$P\setminus \{z_1\}$ hit by $s(z_1)$, and we denote the point by $p$.
Then, $p$ is the new skyline-left point of $C_i$ (without considering
$z_1$). Note that $s(z_1)$ is still an $s_2$-type generating segment
for $p$.

Next, from $C_i$, we continue to find the cells of $\calC'_2$ in a
bottom-up manner in the same way as the algorithm in Lemma
\ref{lem:70} until we meet the cell $C_{i-1}$.
Note that it is possible that $\calC'_2=\emptyset$. Again,
it takes two segment-dragging queries (using $\calD_P$) on each cell of $\calC'_2$ to find
its skyline-left and skyline-bottom point as well as their generating
segments.
Also, the algorithm will find the new
skyline-bottom point of $C_{i-1}$ if it changes in $\pi_2$.
Recall that given any point $p$, we can determine the cell of $\calA$
that contains $p$ in $O(\log m)$ time (by binary search on the sorted
vertical lines of $\calA$ and on the sorted horizontal lines of
$\calA$). Therefore, in this case,
the total running time to determine $\calC_2$
is $O((1+m'_2)(\log^2 n+\log m))$ time.

\item
If $s(z_1)$ is an $s_0$-type, then $C_{i}$ is the topmost cell of
$\calC_1$ in the column $D$, which implies that $C_{i-1}$ is in a
column to the left of $D$. Denote by $D'$ the column of $\calA$
containing $C_{i-1}$ and let $p$
be the skyline-bottom point of $C_{i-1}$. According to the algorithm
in Lemma \ref{lem:70}, $s(z_1)$ is the vertical line segment on the right
side of $D'$ where the lower endpoint of $s(z_1)$ is on the
horizontal line $y=y(q^*)$ and the upper endpoint has the same
$y$-coordinate as $p$, and $z_1$ is the first point of $P$ hit by
dragging $s(z_1)$ rightwards.

By using the data structure $\calD_P$ (after deleting $z_1$),
we do a segment-dragging query by dragging $s(z_1)$ rightwards;
let $p'$ be the point returned by the query (i.e., $p'$ is the
first point of $P\setminus\{z_1\}$ hit by dragging $s(z_1)$ rightwards).
Note that $s(z_1)$ is still an $s_0$-type generating segment for $p'$.

\begin{enumerate}
\item

If $p'$ is in $C_i$, then $p'$ is the new skyline-left point of $C_i$,
and $C_i$ is still the topmost cell of $\calC_2$ in $D$, which implies
$\calC'_2=\emptyset$.

\item
\label{lab:200}

If $p'$ is not in $C_i$, then let $C'$ be the cell containing $p'$ and
$p'$ is the skyline-left point of $C'$.
Since $z_1$ is not the skyline-bottom point of $C_i$, the cell $C'$ is
still in the column $D$ and is higher than $C_i$ (see
Fig.~\ref{fig:cases}(a)). Then, from the cell
$C_i$ to $C'$,
we use the bottom-up procedure as in the algorithm in Lemma
\ref{lem:70} to find the cells of $\calC_2$ between $C_i$ and $C'$ in
the column $D$ and these cells (expect $C_i$) constitute the set $\calC'_2$.
Again, it takes two segment-dragging queries (using $\calD_P$)
for each cell of $\calC'_2$ to find
its skyline-left and skyline-bottom point as well as their generating
segments.
\end{enumerate}

The total running time is $O((1+m'_2)(\log^2 n+\log m))$ time.
\end{enumerate}

\item

If $z_1$ is the skyline-bottom point but not the skyline-left point,
then according to our algorithm in Lemma \ref{lem:70}, $s(z_1)$ is
either an
$s_1$-type or an $s_3$-type. Note that since $z_1$ is not the
skyline-left point of $C_i$, the skyline-left point of $C_i$ is still
in the skyline $\pi_2$, which implies that $C_i$ is still in $\calC_2$
and no cell of $\calC'_2$ is between $C_{i-1}$ and $C_i$ in the canonical
order of $\calC_2$. In other words, all cells of $\calC'_2$ are between
$C_{i}$ and $C_{i+1}$ in the canonical order of $\calC_2$.
Denote by $D$ the column of $\calA$ that contains $C_i$.

\begin{figure}[t]
\begin{minipage}[t]{\linewidth}
\centering
\begin{tabular}{ccc}
\includegraphics[totalheight=2.0in]{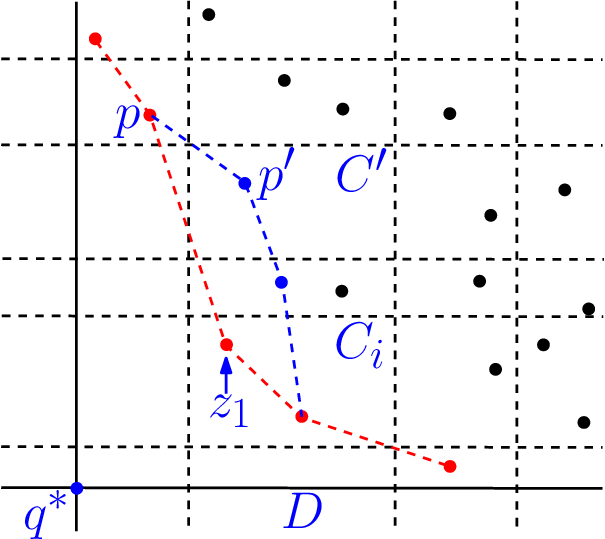}
& \hspace*{3mm}
&
\includegraphics[totalheight=2.0in]{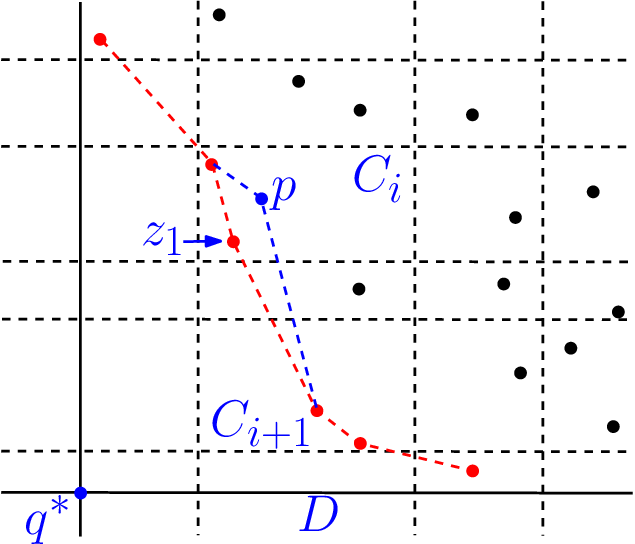} \\
(a) & & (b)
\end{tabular}
\caption{\footnotesize (a). Illustrating Case \ref{lab:200}:
If $C'\neq C_i$, then $C'$ is in $D$ and higher than $C_i$. (b). Illustrating Case \ref{lab:100}: $p$ is the
first point hit by dragging $s(z_1)$ upwards without considering $z_1$.
}
\label{fig:cases}
\end{minipage}
\vspace*{-0.15in}
\end{figure}

\begin{enumerate}
\item
\label{item:20}
If $s(z_1)$ is an $s_1$-type, then $C_i$ is the bottommost cell of
$\calC_1$ in $D$.
According to the algorithm in Lemma \ref{lem:70}, $s(z_1)$ is the
intersection of $D$ and the horizontal line
$y=y(q^*)$. By using the data structure $\calD_P$ (after deleting $z_1$),
we do a segment-dragging query by dragging $s(z_1)$
upwards and let $p$ be the point returned by the query. Then $p$ is
the new skyline-bottom point of $C_i$. Next, we find the cells in
$\calC'_2$.

Let $s'$ be the vertical segment on the right side of $D$
where the lower endpoint of $s'$ is on the
horizontal line $y=y(q^*)$ and the upper endpoint of $s'$ has the same
$y$-coordinate as $p$.
We do a segment-dragging query by
dragging $s'$ rightwards and let $p'$ be the point given by the query.
The segment $s'$ is the generating segment of $p'$, and in fact,
$s'$ is an $s_0$-type generating segment based on our definition in
the proof of Lemma \ref{lem:70}. Denote by $C(p')$ the
cell of $\calA$ that contains $p'$.
Let $D'$ be the column that contains $C_{i+1}$.

\begin{enumerate}
\item
If $C(p')$ is in $D'$, then there are further two cases.
If $C(p')$ is $C_{i+1}$, then $p'$ is the new skyline-left point of
$C_{i+1}$, and $\calC'_2=\emptyset$.
Otherwise, from $C_{i+1}$ to $C(p')$, we use
the same bottom-up procedure as in the algorithm in Lemma \ref{lem:70} to
find all cells of $\calC_2$ between $C_{i+1}$ and $C(p')$, and these
cells (expect $C_{i+1}$) constitute the set $\calC'_2$.

\item
If $C(p')$ is not in $D'$, then it must be in a column to the left of
$D'$. From the cell $C(p')$, we proceed in the same way as in the
algorithm in Lemma
\ref{lem:70} until the first time we find a cell in the column $D'$. Then, we
use the same algorithm as the above case where $C(p')$ is in $D'$.
\end{enumerate}

\item
\label{lab:100}

If $s(z_1)$ is an $s_3$-type, then $C_i$ is the not bottommost cell of
$\calC_1$ in $D$, which implies that $C_{i+1}$ is in $D$. We show
below that $\calC'_2=\emptyset$; further, we
will find a new skyline-bottom point in $C_i$ (without considering
$z_1$).

Based on our algorithm
in Lemma \ref{lem:70}, the generating segment $s(z_1)$ of $z_1$ is the
horizontal line segment on the top side of $C_{i+1}$ whose left
endpoint is the upper left vertex of $C_{i+1}$ and right endpoint has
the same $x$-coordinate as the skyline-left point of $C_{i+1}$.
By using the data
structure $\calD_P$ (after deleting $z_1$), we do a segment-dragging
query by dragging $s(z_1)$ upwards, and let $p$ be the point returned
by the query.

Note that $z_1$ is the lowest point of $P$ that will be hit by
dragging $s(z_1)$ upwards and $z_1$ is in $C_i$.
The point $p$ is the lowest point of $P\setminus\{z_1\}$ that will be hit by
dragging $s(z_1)$ upwards (see Fig.~\ref{fig:cases}(b)).
Clearly, $p$ cannot be any cell of $D$
lower than $C_i$. On the other hand, since $z_1$ is not the
skyline-left point of $C_i$, the skyline-left point of $C_i$ is still
in $C_i$. Note that when we drag $s(z_1)$ upwards, the skyline-left
point of $C_i$ will be hit by $s(z_1)$ (but not necessarily the first
point hit by $s(z_1)$), and this implies that the point $p$ must be in $C_i$.
In other words, $p$ is the skyline-bottom point of $C_i$ in the new
skyline $\pi_2$, and further $\calC'_2=\emptyset$.

\end{enumerate}

In any case above, the total running time is $O((1+m'_2)(\log^2 n+\log m))$ time.
\item

It remains to discuss the case where $z_1$ is both the skyline-left
point and the skyline-bottom point of $C_i$.
In this case, $z_1$ is the
first time identified as either the skyline-left point or the
skyline-bottom point. In general, unlike the second and the third
cases where the cells of $\calC'_2$ are either between $C_{i-1}$ and
$C_i$ or between $C_i$ and $C_{i+1}$ in the canonical order of
$\calC_2$, in this case the cells of $\calC'_2$ may lie both between $C_{i-1}$ and
$C_i$ and between $C_i$ and $C_{i+1}$. Hence, our algorithm may need to
search on both ``directions''.  In addition, in the previous three
cases, the cell $C_i$ must be in $\calC_2$; in this case, however, it is possible
that $C_i$ is not in $\calC_2$.

\begin{enumerate}
\item
If $z_1$ is the first time identified as the skyline-left point of
$C_i$, then $C_i$ must be the topmost cell of $\calC_1$ in $D$ where
$D$ is the column of $\calA$ that contains $C_i$, which implies that
its generating segment $s(z_1)$ must be an $s_0$-type.

Let $p$ be the skyline-bottom point of the cell $C_{i-1}$. Let $D'$ be
the column of $\calA$ that contains $C_{i-1}$.  According to the algorithm
in Lemma \ref{lem:70}, $s(z_1)$ is the vertical line segment on the right
side of $D'$ where the lower endpoint of $s(z_1)$ is on the
horizontal line $y=y(q^*)$ and the upper endpoint has the same
$y$-coordinate as $p$. By using the data structure $\calD_P$
(after deleting $z_1$), we do a segment-dragging query by
dragging $s(z_1)$ rightwards, and let $p'$ be the point given
by the query. Let $C(p')$ be the cell that contains $p'$.
Note that $p'$ is the skyline-left point of $C(p')$.
Let $D''$ be the column that contains the cell $C_{i+1}$. Note that it is possible that
$D''=D$.

\begin{enumerate}
\item
If $C(p')$ is also in $D''$, then there are further two cases.

If $C(p')=C_{i+1}$, then $\calC'_2=\emptyset$ and $C_i\not\in \calC_2$.

Otherwise, $C(p')$ must be higher than $C_{i+1}$ in $D''$.
Then, from the cell $C_{i+1}$ to $C(p')$,
we use the same bottom-up procedure as in the algorithm in Lemma
\ref{lem:70} to find all cells of $\calC_2$ between $C_{i+1}$ and
$C(p')$, and these cells (except $C_{i+1}$ and possibly $C_i$)
constitute the set $\calC'_2$. Note that the cell $C_i$ may or may not
be identified as in $\calC_2$ in the above procedure.

\item
\label{item:100}
If $C(p')$ is not in $D''$, then $C(p')$ must be in a column to the
left of $D''$. We proceed from $C(p')$ in the same way as in the
algorithm in Lemma \ref{lem:70} until the first time we find a cell in
$D''$. Then, we use the same algorithm as in the above case (i.) to
determine $\calC_2$.
\end{enumerate}

In any case, the total running time is $O((1+m'_2)(\log^2 n+\log m))$ time.

\item
If $z_1$ is the first time identified as the skyline-bottom point,
then its generating segment $s(z_1)$ can be either an $s_1$-type or an
$s_3$-type segment. Let $D$ be the column that contains $C_i$.
In this case, $C_i$ is not the topmost cell of $\calC_2$ in $D$ since
otherwise $z_1$ would be the first time identified as the skyline-left
point of $C_i$. This means that $C_{i-1}$ is also in $D$.

\begin{enumerate}

\item
If $s(z_1)$ is an $s_1$-type segment,
then $C_i$ must be the bottommost cell of $\calC_1$ in the column
$D$. According to the algorithm in Lemma \ref{lem:70}, $s(z_1)$ is the
intersection of $D$ and the horizontal line
$y=y(q^*)$. By using the data structure $\calD_P$ (after deleting $z_1$),
we do a segment-dragging query by dragging $s(z_1)$
upwards and let $p$ be the point returned by the query.
Let $C(p)$ be the cell that contains $p$. Clearly, $C(p)$ is in
$\calC_2$. Let $\calC'_{21}$ be the subset of cells in $\calC'_2$ that are
between $C_{i-1}$ and $C(p)$ in the canonical order of $\calC_2$, and
let $\calC'_{22}=\calC'_2\setminus \calC'_{21}$; in other words,
$\calC'_{22}$ is the subset of cells in $\calC'_2$ that are
between $C(p)$ and $C_{i+1}$ in the canonical order of $\calC_2$.
Below, we will find $\calC'_{21}$ and $\calC'_{22}$ separately, by searching
from $C(p)$ towards two ``directions'': one towards $C_{i-1}$ and the
other towards $C_{i+1}$.

Since $C_{i-1}$ is also in $D$, from $C(p)$ to $C_{i-1}$, we use the
bottom-up procedure to find the cells of $\calC_2$ between $C(p)$
and $C_{i-1}$, and these cells (except $C_{i-1}$ and possibly $C_i$)
constitute the set $\calC'_{21}$.
Note that the cell $C_i$ may also be identified in $\calC_2$.


Next, we find the set $\calC'_{22}$, which can be done by the same
algorithm as in Case \ref{item:20}. We omit the details.

\item
If $s(z_1)$ is an $s_3$-type, then $C_i$ is the not bottommost cell of
$\calC_1$ in $D$, which implies that $C_{i+1}$ is in $D$.

Based on our algorithm
in Lemma \ref{lem:70}, the generating segment $s(z_1)$ of $z_1$ is the
horizontal line segment on the top side of $C_{i+1}$ whose left
endpoint is the upper left vertex of $C_{i+1}$ and right endpoint has
the same $x$-coordinate as the skyline-left point of $C_{i+1}$.
By using the data
structure $\calD_P$ (after deleting $z_1$), we do a segment-dragging
query by dragging $s(z_1)$ upwards, and let $p$ be the point returned
by the query. Let $C(p)$ be the cell that contains $p$.

Note that $z_1$ is the lowest point of $P$ that will be hit by
dragging $s(z_1)$ upwards and $z_1$ is in $C_i$.
The point $p$ is the lowest point of $P\setminus\{z_1\}$ that will be hit by
dragging $s(z_1)$ upwards. Since $C_{i-1}$ is also in $D$, $C_{i-1}$
is higher than $C_i$. Hence, the cell $C(p)$ is one of the cells of $D$ between
(and including) $C_{i-1}$ and $C_i$ (this is because
the vertices of $\pi_1$ in $C_{i-1}$ are
all to the left of the right endpoint of $s(z_1)$ and to the right of
the left endpoint of $s(z_1)$). Hence, from $C(p)$ to $C_{i-1}$,
we can use the bottom-up
procedure as before to find the cells of $\calC_2$, these cells (except
$C_{i-1}$ and possibly $C_i$) constitute the set $\calC'_2$.
\end{enumerate}

In any case, the total running time is $O((1+m'_2)(\log^2 n+\log m))$ time.
\end{enumerate}
\end{enumerate}

In summary, we can determine the set $\calC_2$ in $O((1+m'_2)(\log^2 n+\log m))$
time. More specifically, for each cell in $\calC'_2$, we have computed its skyline-left
point and its skyline-bottom point as well as their generating
segments. We have also determined whether $C_i$ is in $\calC_2$, and
if yes, its new skyline-bottom point and skyline-left point are
computed if any of them changes.
The new skyline-bottom point of $C_{i-1}$ has
been found if it changes, and the new skyline-left point of $C_{i+1}$ has
been found if it changes. In addition, in the above algorithm,
we can also order all cells of $\calC'_2$ (with $C_i$ if $C_i\in \calC_2$)
from northwest to southeast with the same running time, and therefore, along with the ordered
cells from $C_1$ to $C_{i-1}$ and the ordered cells from $C_{i+1}$ to
$C_{m_1}$, we have obtained a canonical order for~$\calC_2$.
\end{proof}

By Lemma \ref{lem:100}, we can determine the set $\calC_2$, and in
particular, we have the set $\calC'_2$ explicitly, and we know whether
the cell $C(z_1) \in \calC_2$. Similarly to Lemma \ref{lem:65}, the
second $\ANN$ $z_2$ is in one of the cells of $\calC_2$.
Denote by $P_1=P\setminus\{z_1\}$.

To find $z_2$, as in the case for finding $z_1$, a straightforward
approach is to compute the $\ANN$ of $Q$ in
$P_1\cap C$ for each $C\in \calC_2$, and then among the
$|\calC_2|$ candidate points, report the one with the smallest aggregate distance to $Q$
as $z_2$. This approach will lead to an
$O(km)$ time query algorithm for finding $S_k(P^1)$.
Below, we present a better method.

Note that when computing $z_1$, we have computed the
$\ANN$ $S_1(P\cap C)$ for each $C \in \calC_1$. Also, for each cell
$C\in \calC_1$, if $C\neq C(z_1)$, then $C\in \calC_2$ and $P\cap C=P_1\cap C$.
Therefore, if we maintain the $\ANN$s for all cells of
$\calC_{1}\setminus{C(z_1)}$, we do not have to compute them again.
In other words, when computing $z_2$, we only need to compute
the $\ANN$s in the cells of $\calC'_2$. In addition, if $C(z_1) \in \calC_2$,
we will use a special approach to compute $S_1(P_1\cap C(z_1))$.
To maintain the $\ANN$s in the involved cells mentioned above,
we use a min-heap $H$, as follows.

When searching $z_1$, for each $C\in \calC_1$,
after the $\ANN$ $S_1(P\cap C)$ is computed, we insert it into $H$ with
its aggregate distance to $Q$ as the ``key''. After the $\ANN$s for
all cells of $\calC_1$ are computed and inserted into $H$, the point in $H$
with the smallest key is $z_1$. Note that $H$ has $m_1 = |\calC_1|$ points.
To compute the second $\ANN$ $z_2$, we
first determine $\calC'_2$ by Lemma \ref{lem:100}.
By the ``Extract-Min'' operation
of min-heaps \cite{ref:CLRS09}, we remove $z_1$ from $H$.
We compute the $\ANN$s of the cells in $\calC_2'$ and insert them into
$H$. If $C(z_1)\not\in \calC_2$, then the point of $H$ with
the smallest key is $z_2$. Otherwise, we use the following special
approach to determine $S_1(P_1\cap C(z_1))$.

One tempting approach is to have a dynamic version of the data
structure in Lemma \ref{lem:80} to support point deletions from $P$.
Unfortunately, due to the ``static'' nature of compact interval
trees, it is not clear to us how to design such a dynamic data
structure without deteriorating the performance.  Instead, we present
another method to ``mimic'' point deletions, as follows.

We divide the cell $C(z_1)$ into two sub-cells $C_1(z_1)$ and $C_2(z_1)$
using the horizontal line through $z_1$. Hence, $z_1$ is on the common
edge of the two sub-cells.  Note that due to our general
position assumption, no point of $P$ is on the boundary of $C(z_1)$.
Hence, no point of $P_1=P\setminus\{z_1\}$ is on the boundary of
$C_1(z_1)$ (or $C_2(z_1)$). Below, we use $C_1(z_1)$ (resp.,
$C_2(z_1)$) to refer to only its interior.
Instead of computing the $\ANN$ $S_1(P_1\cap C(z_1))$ and insert it into $H$,
we compute the $\ANN$s $S_1(P\cap C_1(z_1))$ and $S_1(P\cap C_2(z_1))$
and insert them into $H$; note that one of them is
$S_1(P_1\cap C(z_1))$. The reason we divide  $C(z_1)$ into two
sub-cells as above is that we can now
simply use the data structure in Lemma \ref{lem:80} to compute
$S_1(P\cap C_1(z_1))$ and $S_1(P\cap C_2(z_1))$; in other words, $z_1$
appears to be ``deleted'' from the data structure of
Lemma \ref{lem:80}. Clearly, now, the point of $H$ with smallest key is $z_2$.

To analyze the running time for computing $z_2$, $\calC_2$ can be
determined in $O((1+m'_2)(\log^2 n+\log m))$ time, after which, we
compute the $\ANN$s for the cells of $\calC'_2$ and possibly for the two
sub-cells of $C(z_1)$ in $O((2+m'_2)\log^2 n)$ time by Lemma \ref{lem:80}. Then,
one ``Extract-Min'' operation and at most $m_2'+2$ insertions
on $H$ together take $O((m'_2+3)\log(|H|))$ time; note that $|H|\leq m_1+m_2'+2$
(here ``2'' corresponds to the number of possible sub-cells).

It should be noted that we need to explicitly
maintain the two sub-cells $C_1(z_1)$ and $C_2(z_1)$
because later they may be further divided into smaller sub-cells
(e.g., if $z_2\in C_1(z_1)$ and $C(z_1)\in \calC_3$, then $C_1(z_1)$
will be divided for computing $z_3$).
Also note that these sub-cells are only maintained for computing
$\ANN$s and they will not be considered when we determine the sets
$\calC_i$'s (in Lemma \ref{lem:100}).
After $z_2$ is found, we proceed to search the third $\ANN$ $z_3$ similarly.

In general, suppose we have computed $\calC_i$
and $z_i$, and we are about to find $z_{i+1}$.
We first determine $\calC_{i+1}$ by computing
$\calC_{i+1}'$ and determining whether $C(z_i)\in \calC_{i+1}$, where $C(z_i)$ is the cell of $\calC_{i}$ that
contains $z_i$; this can be done in $O((1+m'_{i+1})(\log^2 n+\log m))$ time
similarly as in Lemma \ref{lem:100}.

Note that for any cell $C\in \calC'_{i+1}$, it never
appears in $\calC_j$ for any $1\leq j\leq i$.
Next, we determine the $\ANN$s in the cells of $\calC'_{i+1}$ by Lemma
\ref{lem:80} and insert them into the heap $H$. We also need to remove $z_i$
from $H$. If $C(z_i)\notin \calC_{i+1}$, then the point of $H$ with
smallest key is $z_{i+1}$. Otherwise, as before, we
divide $C(z_i)$ into two sub-cells and compute their
$\ANN$s and insert them into $H$. Note that $C(z_i)$ may have already
been divided into many sub-cells before. If so, they are explicitly
maintained, and we can find the sub-cell that contains
$z_i$ in $O(\log k)$ time by binary search
since $C(z_i)$ has at most $k-1$ sub-cells ordered by $y$ values. Then, we divide the
sub-cell into two smaller sub-cells
by the horizontal line through $z_i$ and compute the
$\ANN$s in the two smaller sub-cells by Lemma \ref{lem:80} and insert
them into $H$. Now, the point of $H$ with smallest key is $z_{i+1}$.

To analyze the running time for computing $z_{i+1}$, $\calC_{i+1}$ can be
determined in $O((1+m'_{i+1})(\log^2 n+\log m))$ time. The time for
computing the $\ANN$s for the cells in $\calC'_{i+1}$ and possibly two
sub-cells is bounded by $O((2+m'_{i+1})\log^2 n)$. There are
$O(2+m'_{i+1})$ insertions and one ``Extract-Min'' operation on $H$,
which together take $O((m'_{i+1}+3)\log(|H|))$ time. Note
that $|H|\leq m_i+m'_{i+1}+2$.

We repeat the above procedure until $z_k$ is found.
We have the following lemma (a crucial
observation is that $m_1+\sum_{i=2}^km'_i=O(m+k)$).

\begin{lemma}\label{lem:110}
The overall running time of our query algorithm for finding
$S_k(P^1)=\{z_1,z_2,\ldots,z_k\}$ is $O(m\log m+(k+m)\log^2 n)$.
\end{lemma}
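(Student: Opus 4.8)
The plan is to sum up, over the $k$ iterations, the per-iteration costs that have been carefully laid out in the paragraphs immediately preceding the lemma, and then collapse the telescoping sum $m_1+\sum_{i=2}^k m_i'$ using the observation that each cell ever placed in any $\calC_i$ is counted at most once there. First I would record the three pieces of cost for iteration $i$ (producing $z_{i}$ from $\calC_i$, for $i\ge 2$, and the base case $i=1$): (a) determining $\calC_{i}$ by updating $\calC_{i-1}$, which by Lemma~\ref{lem:100} (and its generalization stated in the text) takes $O\bigl((1+m_i')(\log^2 n+\log m)\bigr)$ time; (b) computing the ENN in each newly appearing cell of $\calC_i'$ and in at most two sub-cells of $C(z_{i-1})$, which by Lemma~\ref{lem:80} takes $O\bigl((2+m_i')\log^2 n\bigr)$ time; (c) the heap work, namely one Extract-Min and $O(2+m_i')$ insertions, costing $O\bigl((m_i'+3)\log|H|\bigr)$. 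For the base case $i=1$, computing $\calC_1$ from scratch costs $O(m_1\log n+m_1\log m)$ by Lemma~\ref{lem:70} and the $m_1$ ENN computations cost $O(m_1\log^2 n)$ by Lemma~\ref{lem:80}, with $m_1$ heap insertions costing $O(m_1\log m_1)$.

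Next I would bound $|H|$ uniformly. At iteration $i$ the heap contains at most the ENNs of the cells accumulated so far plus the sub-cells created so far; since a cell of $\calA$ is inserted into $H$ only when it first appears in some $\calC_j$, the number of genuine cell-entries is at most $m_1+\sum_{j\le i}m_j'$, and the number of sub-cell entries is at most $O(k)$ (each $z_j$ creates at most one new sub-cell split). Hence $|H|=O(m+k)$ throughout (using the key observation, stated parenthetically in the lemma, that $m_1+\sum_{i=2}^k m_i'=O(m+k)$), so every heap operation costs $O(\log(m+k))=O(\log m+\log k)=O(\log^2 n)$ — or one can simply keep the $\log(m+k)$ factor, which is dominated.

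Then the running-time accounting is a straightforward summation. Adding (a)+(b)+(c) over $i=1,\dots,k$ and using $\sum_{i=2}^k 1 = O(k)$ and $m_1+\sum_{i=2}^k m_i' = O(m+k)$, the total is
\[
O\Bigl((m+k)(\log^2 n+\log m)\Bigr) + O\Bigl((m+k)\log^2 n\Bigr) + O\Bigl((m+k)\log(m+k)\Bigr) + O(m_1\log n),
\]
and since $k<n$ gives $\log(m+k)=O(\log m+\log n)$ and $\log^2 n$ dominates $\log n$, this simplifies to $O\bigl((k+m)\log^2 n + m\log m\bigr)$. Strictly, the $\log m$ terms from part (a) contribute $O((m+k)\log m)$, which is $O(m\log m)+O(k\log m)=O(m\log m)+O(k\log^2 n)$, so all of it folds into the claimed bound $O(m\log m+(k+m)\log^2 n)$.

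**The main obstacle** I anticipate is not any single inequality but making the telescoping rigorous: one must argue that the incremental sets $\calC_i'$ are genuinely disjoint from $\calC_1\cup\dots\cup\calC_{i-1}$ — this is exactly the remark in the text that "for any cell $C\in\calC'_{i+1}$, it never appears in $\calC_j$ for any $1\le j\le i$", which in turn rests on the monotone-skyline structure and the fact that removing $z_1,\dots,z_{i-1}$ only ever \emph{extends} the skyline into cells dominated by previously extracted points. Granting that disjointness (and hence $\sum m_i' \le |{\bigcup_i \calC_i}| - m_1$, while each $|\calC_i|=O(m)$ forces $m_1=O(m)$ and the total number of distinct cells ever touched is $O(m+k)$ since each $z_j$ can add only $O(1)$ amortized new cells), the bound $m_1+\sum_{i=2}^k m_i'=O(m+k)$ follows, and the rest is bookkeeping. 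I would therefore state that amortized-counting lemma explicitly as the crux and then present the summation above as the proof of Lemma~\ref{lem:110}.
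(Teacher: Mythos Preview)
Your overall structure is correct and matches the paper's proof: accumulate the per-iteration costs (Lemma~\ref{lem:70} for the base case, Lemma~\ref{lem:100} for the updates, Lemma~\ref{lem:80} for the ENN computations, plus heap work), then collapse everything using $\lambda := m_1+\sum_{i=2}^k m_i' = O(m+k)$.

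The gap is precisely in your justification of $\lambda=O(m+k)$. You correctly identify this as ``the main obstacle,'' but the argument you give---``each $z_j$ can add only $O(1)$ amortized new cells''---is both circular and, as stated, false. The per-step amortized number of new cells is $\sum_{i=2}^k m_i'/(k-1)$, which can be as large as $\Theta(m/k)$ (imagine $m_1$ small and $m_k=\Theta(m)$); it is \emph{not} $O(1)$. The disjointness of the $\calC_i'$ that you cite only tells you $\lambda = |\calC_1\cup\bigcup_{i\ge 2}\calC_i'|$; it does not by itself bound that cardinality.

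The paper's argument is a one-line telescoping that you were circling around but did not write down. From ``every cell of $\calC_i$ except possibly $C(z_i)$ survives into $\calC_{i+1}$'' and $\calC'_{i+1}\cap\calC_i=\emptyset$, one gets
\[
m_{i+1}\;\ge\; m_i - 1 + m'_{i+1}\qquad (1\le i\le k-1),
\]
and summing yields $m_k \ge m_1 + \sum_{i=2}^k m_i' - (k-1)$, i.e.\ $\lambda \le m_k + (k-1)$. Since $|\calC_k|=O(m)$ (it is the cell set of a single monotone skyline, Lemma~\ref{lem:60}), this gives $\lambda=O(m+k)$. Equivalently, in your set-theoretic language: every cell in $\calC_1\cup\bigcup_{i\ge 2}\calC_i'$ either persists all the way to $\calC_k$ or is one of the at most $k-1$ cells $C(z_1),\dots,C(z_{k-1})$ that got dropped---hence at most $m_k+(k-1)$ cells in total.

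A minor second point: your reduction ``$O(k\log m)=O(k\log^2 n)$'' assumes $\log m=O(\log^2 n)$, which is not given. The cleaner route (which the paper takes) is to note $\log m\le \log(m+k)$, fold it into the $\log(m+k)$ term, and then dispose of $(k+m)\log(k+m)$ by the case split $k\le m$ (giving $O(m\log m)$) versus $k>m$ (giving $\log(k+m)=O(\log n)$ since $k<n$).
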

\begin{proof}
\def\total{\lambda}
Let $\total=m_1+\sum_{i=2}^km'_i$ denote the total number of cells in $\calC_1\cup
\bigcup_{i=2}^k\calC'_i$.

By Lemma \ref{lem:70}, we compute $\calC_1$ in $O(m\log n+m\log m)$
time.  By Lemma \ref{lem:100}, the
total time for finding all cells of $\bigcup_{i=2}^k\calC'_i$ is
$O((k+\total)(\log^2n +\log m))$.

In the entire algorithm, the total number of operations for
finding the $\ANN$s in the cells of $\calA$ (not including the sub-cells)
is $O(\total)$ because the above cells are those in $\calC_1\cup
\bigcup_{i=2}^k\calC'_i$.
After finding $z_i$ for each $1\leq i\leq k$, we have
at most two more sub-cells, and thus the total number of operations for
finding the $\ANN$s in the sub-cells is $O(k)$.
Hence, by Lemma \ref{lem:80}, the total time for finding the $\ANN$s in the cells and sub-cells
is $O((\total+k)\log^2 n)$. Also, we only need to explicitly maintain at
most $O(k)$ sub-cells in the entire algorithm.

Similarly, the total number of operations on the heap $H$ is
$O(\total+k)$, and the size of $H$ in the entire algorithm is
always bounded by $O(\total+k)$. Hence, the total operations on $H$ take
$O((\total+k)\log (\total+k))$ time.

In summary, the overall running time
is $O((k+\total)(\log^2n +\log m+\log (\total+k)))$.
To prove the lemma, we prove an important {\em claim}: $\total=O(m+k)$.

The proof for the claim is based on the fact that $|\calC_i|=O(m)$
for each $1\leq i\leq k$, since each skyline $\pi_i$ intersects
$O(m)$ cells. In particular, $|\calC_k|=O(m)$. For each
$1\leq i\leq k-1$, all the cells of
$\calC_i$ except $C(z_i)$ are in $\calC_{i+1}$ and the
cell $C(z_i)$ may or may not be in $\calC_{i+1}$. Hence,
$|\calC_{i+1}|\geq |\calC_i|-1+|\calC'_{i+1}|$, i.e., $m_{i+1}\geq
m_i-1+m'_{i+1}$. Therefore,
$m_k\geq m_1+\sum_{i=2}^km'_i-(k-1)$.
Due to $m_k=O(m)$, we have $\total=m_1+\sum_{i=2}^km'_i\leq m_k+k-1=O(m+k)$.
The above claim thus follows.

Due to the above claim, the overall running time for finding $S_k(P^1)$ is
$O((k+m)(\log^2 n+\log (k+m)))$, which is
$O(m\log m+(k+m)\log^2 n)$ (to see this, note that if $k>m $, then
since $n\geq k$, $(k+m)\log (k+m)=O((k+m)\log^2 n)$ holds).
\end{proof}

Note that after obtaining $S_k(P^1)$, we also need to insert the points of $S_k(P^1)$ back to the data structure in Lemma \ref{lem:dynamic} for answering other top-$k$ $\ANN$ queries in future.

\subsection{Wrapping Things Up}

We summarize our methods for the top-$k$ $\ANN$ queries in the $L_1$ metric.

Our preprocessing on $P$ includes the following steps. (1)
Sort all points in $P$ by their $x$-coordinates and $y$-coordinates,
respectively. (2) Build the dynamic segment-dragging query data
structure in Lemma \ref{lem:dynamic}
on $P$. (3) Construct the data structure in Lemma \ref{lem:80}.
The total time and space are dominated by Step
(3), regardless of which data structure of Lemma \ref{lem:80} is used.

Given any query set $Q$ and any $k$, we compute
$S_k(P)$ in the following steps. (1) Sort all points in
$Q$ by their their $x$-coordinates and $y$-coordinates,
respectively. (2) Process $Q$ as in Lemma \ref{lem:50}.
(3) Compute a global minimum point $q^*$.
(4) Divide the plane into four quadrants with respect to $q^*$. In
each quadrant $R$, we find the top-$k$ $\ANN$s of $Q$ in $P\cap R$ as follows.
Suppose $R$ is the first quadrant. (4.1) Find the set
$\calC_1$ by Lemma \ref{lem:70}, and for each cell $C\in \calC_1$,
find the $\ANN$ of $Q$ in $P\cap C$ by Lemma \ref{lem:80} and insert
the point into a min-heap $H$; the point of $H$ with smallest
aggregate distance to $Q$ is the $\ANN$ of $Q$ in $P\cap R$.
(4.2) Based on $\calC_1$ and $z_1$, determine $\calC_2$ and
find $z_2$. (4.3) The above procedure continues until we
find $z_k$.
(5) Among the found $4k$ points from all four quadrants of $q^*$
(their aggregate distances to $Q$ have also been computed), we report
the $k$ points with smallest aggregate distances to $Q$ as
$S_k(P)$. (6) Insert the above $4k$ points back to the data structure
in Lemma \ref{lem:dynamic} (for answering other top-$k$ $\ANN$ queries in
future).

For the running time of the query algorithm, the first three steps can
be done in $O(m\log m)$ time; Step (4) can be done in
$O(m\log m+(k+m)\log^2 n)$ time. Step (5) takes $O(k)$
time. Step (6) needs $O(k\log^2 n)$ time. Hence, the total query time is bounded by $O(m\log m+(k+m)\log^2 n)$. If we use the other two data structures in Lemma \ref{lem:80}, then we have the query times of $O(m\log m+(k+m)\log^2 n\log\log n)$ and $O(m\log m+(k+m)\log^2 n\log^*n)$, respectively.

\begin{theorem}\label{theo:2d}
Given a set $P$ of $n$ points in the plane,
a data structure of $O(n\log n\log\log n)$ size can be built in
$O(n\log n\log\log n)$ time, such that for any weighted set $Q$ and any $k$, the
top-$k$ $\ANN$s can be found in $O(m\log m + (k+m)\log^2 n)$ time.
With trade-off between preprocessing and query time, we also build two other data structures: the first one has $O(n\log n)$ preprocessing time and space with $O(m\log m+(k+m)\log^2 n\log\log n)$ query time; the second one has $O(n\log n\log^* n)$ preprocessing time and space with $O(m\log m+(k+m)\log^2 n\log^* n)$ query time.
\end{theorem}

\section{Top-$k$ Aggregate Farthest Neighbor ($\AFN$) Searching in the $L_1$ Metric}
\label{sec:AFN}

Our techniques can be extended to solve the top-$k$ $\AFN$ searching, with the
same time bounds as in Theorems \ref{theo:1d} and \ref{theo:2d}.

For the 1-D case, given a query set $Q$ and $k$, recall that for
computing the top-$k$ $\ANN$s, we first compute the global minimum point
$q^*$ and then search in $P$ beginning from $q^*$ simultaneously
towards left and
right. To compute the top-$k$ $\AFN$s, due to Lemma \ref{lem:1dmonotone}, we search in
$P$ simultaneously beginning from the leftmost and rightmost points of $P$ and
towards the middle (e.g., either the leftmost or the rightmost point
of $P$ is the top-1 $\AFN$ of $Q$ by Lemma \ref{lem:1dmonotone}). The rest of the algorithm is similar
as that for Theorem \ref{theo:1d} and we omit the details. Hence, we
can obtain the following result.

\begin{theorem}\label{theo:1dAFN}
Given a set $P$ of $n$ points on the real line $L$, with
$O(n\log n)$ preprocessing time and $O(n)$ space, the top-$k$ $\AFN$s
can be found in $O(\min\{k,\log m\}\cdot m+k+\log n)$ time
for any query set $Q$ and any $k$; if the points of $Q$
are given sorted on $L$, then the query time is $O(k+ m+ \log
n)$.
\end{theorem}

For the 2-D case, consider any query set $Q$ and $k$. As in the $\ANN$ case,
we first compute
a global minimum point $q^*$ and then compute the top-$k$ $\AFN$s
in each quadrant of $q^*$. Suppose $R$ is the first quadrant with
respect to $q^*$. We find the top-$k$ $\AFN$s in $R$ as follows.
Recall that in the $\ANN$ case we search the top-$k$ $\ANN$s in a direction
from $q^*$ towards northeast. In the $\AFN$ case, due to Lemma
\ref{lem:2dmonotone}, we search the top-$k$ $\AFN$s along
the opposite direction, i.e., from northeast towards $q^*$. Specifically, here we
re-define the ``dominate'' relationship in the opposite way as before: a
point $p_1$ {\em dominates} $p_2$ if and only if $x(p_1)\geq x(p_2)$
and $y(p_1)\geq y(p_2)$. A
point $p$ in $P\cap R$ is called a {\em maximal point} if no other point
in $P\cap R$ dominates $p$. Similarly, we re-define the {\em skyline}
as the path connecting all maximal points of $P\cap R$. According to Lemma
\ref{lem:2dmonotone}, the $\AFN$ of $Q$ in $P\cap R$ must be in the skyline.
Then, we can use a similar algorithm as in the $\ANN$ case to compute all
top-$k$ $\AFN$s. More
specifically, we first compute the $\AFN$ $p$ on the skyline and then search
the second $\AFN$ on the next skyline (without considering $p$); we continue this procedure until
we find the $k$-th $\AFN$. The algorithm is similar (or symmetric) as the
$\ANN$ case and we omit the details. Hence, we can obtain the following
result.

\begin{theorem}\label{theo:2dAFN}
Given a set $P$ of $n$ points in the plane,
a data structure of $O(n\log n\log\log n)$ size can be built in
$O(n\log n\log\log n)$ time, such
that for any weighted set $Q$ and integer $k$, the
top-$k$ $\AFN$s can be found in $O(m\log m + (k+m)\log^2 n)$ time.
With trade-off between preprocessing and query time, we also build two other data structures: the first one has $O(n\log n)$ preprocessing time and space with $O(m\log m+(k+m)\log^2 n\log\log n)$ query time; the second one has $O(n\log n\log^* n)$ preprocessing time and space with $O(m\log m+(k+m)\log^2 n\log^* n)$ query time.
\end{theorem}

\section{Conclusions}
\label{sec:conclusion}
We presented efficient methods for the top-$k$ aggregate nearest and farthest neighbor searching in the plane under the $L_1$ metric. Our results are the first-known solutions for the general top-$k$
queries on the weighted query points. Even for the special case where
$k=1$ or the unweighted query points, our results are generally better
than the previous work. While it would be interesting to investigate whether any further improvements are possible, another open problem is whether and how the
techniques proposed in this paper can be extended to
higher dimensional spaces.

\vspace*{0.1in}
\noindent\textbf{Acknowledgments.} The authors would like to thank Pankaj K. Agarwal
 for helpful discussions in early phases of this work.




\bibliographystyle{plain}
\bibliography{reference}

%


\end{document}